\newcommand\newsubcap[1]{\phantomcaption%
       \caption*{\textbf{#1}}}
\newcommand{\he}{\mathit{He}}
\newcommand{\Psibar}{\bar \Psi}
\newcommand{\cas}{\overset{a.s.}{\longrightarrow}}
\newcommand{\cip}{\overset{p}{\longrightarrow}}
\newcommand{\cid}{\overset{d}{\longrightarrow}}
\newcommand{\mom}{\text{\tiny{MoM}}}
\newcommand\capa{\textbf{(a)}\,}
\newcommand\capb{\textbf{(b)}\,}
\newcommand{\cq}{{\cal Q}}
\newcommand{\cp}{{\cal P}}
\newcommand{\ct}{{\cal T}}
\let\sv@endpart\@endpart
\def\@endpart{\thispagestyle{empty}\sv@endpart}
\begin{document}

\title{Infinite Variance in Monte Carlo Sampling of Lattice Field Theories}
\author{Cagin Yunus}
\email{cyunus@mit.edu}
\affiliation{Center for Theoretical Physics,
Massachusetts Institute of Technology, Cambridge, MA 02139, USA}
\author{William Detmold}
\email{wdetmold@mit.edu}
\affiliation{Center for Theoretical Physics,
Massachusetts Institute of Technology, Cambridge, MA 02139, USA}
\affiliation{The NSF Institute for Artificial Intelligence and Fundamental Interactions}

\begin{abstract}
In Monte Carlo calculations of expectation values in lattice quantum field theories, the stochastic variance of the sampling procedure that is used defines the precision of the calculation for a fixed number of samples. 
If the variance of an estimator of a particular quantity is formally infinite, or in practice very large compared to the square of the mean, then that quantity can not be reliably estimated using the given sampling procedure.
There are multiple scenarios in which this occurs, including in Lattice Quantum Chromodynamics, and a particularly simple example is given by the Gross-Neveu model where Monte Carlo calculations involve the introduction of auxiliary bosonic variables through a Hubbard-Stratonovich (HS) transformation. 
Here, it is shown that the variances of HS estimators for classes of operators involving fermion fields  are divergent in this model and an even simpler zero-dimensional analogue. To correctly estimate these observables, two alternative sampling methods are proposed and numerically investigated.

\end{abstract}

\maketitle

\newpage

\section{Introduction}
Quantum field theories (QFTs) at strong coupling are interesting in many contexts in  particle, nuclear,  and  condensed matter physics, but in many cases can only be quantitatively investigated using numerical approaches. One such approach involves discretising the theory on a spacetime lattice with a Euclidean metric. The functional integrals corresponding to measurable quantities can then be approximated using an importance sampling Monte Carlo method. 
In such a calculation, the probability of sampling a given configuration of the field degrees of freedom is determined by the Euclidean action and, depending on the parameters in the action, it is possible that field configurations enter with probability weights arbitrarily close to zero. 
If this is the case, certain random variables (observables corresponding to field operators) will have arbitrary large (infinite) variance.
As will be discussed below, quantities with infinite variance in standard sampling algorithms occur in phenomenologically-relevant theories such as Quantum Chromodynamics (QCD) due to zero-modes of the lattice Dirac operator as well as in other contexts. A particularly clear example is provided by correlation functions constructed from large numbers of fermion fields as will be the focus of this work.\footnote{Observables with infinite variances in fermionic theories have been analysed using a different approach in Ref.~\cite{inf_var_mc}.}

In applying Monte Carlo methods to QFTs, the Central Limit Theorem (CLT) is used to construct confidence intervals for the expectation value (mean) of the random variable from the corresponding variance over the samples. However, a random variable with infinite variance does not satisfy the conditions for the CLT and the sample variance of such a random variable is not meaningful because it does not converge to a particular value with increasing sample size. Moreover, the CLT is valid only in the limit that  the sample size approaches infinity and hence similar deficiencies will appear for random variables with finite but very large variances compared to squares of their means.
Despite these issues, there are physically interesting quantities in QCD and other field theories that formally have finite mean but infinite variance under standard sampling methods. To address these cases, alternative sampling schemes are required for reliable Monte Carlo estimates.

In this work, two methods will be introduced to address specific occurrences of infinite variance. The first method is applicable in the context of fermionic lattice field theories that are typically approached using the (continuous) Hubbard-Stratonovich (HS) transformation such as theories whose actions involve powers of fermion bilinear operators. A class of discrete HS transformations is introduced which generate discrete auxiliary bosonic variables. The variance of an estimator constructed from these discrete bosonic variables  will then be manifestly finite although it may be still very large compared to the square of its mean. This discrete sampling scheme is investigated in a toy model and in the 2D Gross-Neveu (GN) model.
While the approach is seen to be useful in some contexts, it becomes impractical in the limit of large spacetime volumes in its current implementation. 
The second method that is considered is a sequential reweighting procedure that is suitable for analysis of non-negative stochastic variables. With this method, the mean of a such a non-negative bosonic variable with infinite variance can be written as a product of the means of the several non-negative random variables each having finite variance. This approach is also investigated in the toy model and in the 2D GN model but can be applied in more complicated theories.

The structure of this work is as follows. In Sec. \ref{sec:inf-var-euclid}, the way in which random variables with infinite variances arise in lattice calculations of field theories such as QCD is outlined as a motivation for  subsequent studies of related phenomena in simple models. In Sec. \ref{sec:stat-sampling},  the main statistical concepts that are used in our analysis are introduced and interpreted. In Sec. \ref{sec:simple-models}, simple  models are introduced that cleanly exhibit the features that lead to observables with infinite variance. In Sec. \ref{sec:discrete-hs}, a novel discrete Hubbard-Stratonovich transform is presented that provides estimators with manifestly finite variance. This method is tested for the toy models introduced in Sec. \ref{sec:simple-models}. In Sec. \ref{sec:reweighting}, a new reweighting method that can be applied to non-negative stochastic variables is also introduced and this method is then tested for the toy model introduced in Sec. \ref{sec:simple-models}. Finally, Sec. \ref{sec:summary} summarises the results of this work and provides an outlook for future directions of investigation.
A number of important statistical results that support our main analysis are proven in Appendix \ref{appendix:probability} while Appendices  \ref{appendix:mom} and  \ref{app:hermite} present further technical details.

\section{Infinite variance in Euclidean Field Theory}
\label{sec:inf-var-euclid}

One can construct illustrative examples of infinite variance in phenomenologically-relevant theories such as lattice QCD. In this case, the partition function is given by:
\bad 
Z &= \int \cald[U]\cald[\Psi \Psibar] e^{-S[U] -\Psibar D[U] \Psi} \label{Z-QCD} \\ 
&=\int  \cald[U] e^{-S[U]} \det(D[U]) \\ 
&=\int \cald[U] e^{-S[U]}\prod_{\lambda \in \s_{D[U]}} \lambda ,
\ead 
where $U$ represents the gauge field and $\Psi$ and $\Psibar$ represent the fermions.
Here $S[U]$ is the bosonic part of the action of lattice QCD, $D[U]$ is the $N_D \times N_D$ Dirac matrix, the determinant of which arises from integration of the fermion degrees of freedom, and $\s_{D[U]}$ is the spectrum of $D[U]$ which accounts for multiplicities of the eigenvalues. It is assumed that the Dirac matrix $D[U]$ is diagonalizable for each $U$ and can be expressed as
\be 
D[U] = Q_U \Lambda_U \inv Q_U ,
\ee
where $\Lambda_U$ is a diagonal matrix consisting of eigenvalues $\lambda^a\in \s_{D[U]}$ of $D[U]$, and  $Q_U$ is not necessarily unitary.
With this definition, the columns $v^{(a)}_U$ of $Q_U$ and the rows $(w^{(a)}_U)^T$ of $\inv Q_U$ are  the right and left eigenvectors of $D[U]$ respectively and satisfy 
\bad 
\sum_{i} \lp w_{U}^{(a)} \rp_i \lp   v_U^{(b)}\rp_i = \d^{ab}, 
\\ 
\sum_a \lp   v_U^{(a)}\rp_i  \lp w_{U}^{(a)} \rp_j  = \delta_{ij},
\label{eq:wv-rel}
\ead
where $a$ and $b$ label the eigenvalues and $i$ and $j$ index the components of the corresponding eigenvectors.
It must be noted that one can permute and (independently) scale the columns of $Q_U$ freely. Furthermore, $Q_U$ can not generically be chosen continuously in $U$ and consequently the quantities $\l_U^a$, $v^{(a)}_U$ and $w^{(a)}_U$ depend implicitly on the choice of $Q_U$. In terms of these quantities, the components of the inverse of the Dirac operator for field $U$ can be expressed as:
\be 
\inv D[U]_{ij} = \sum_{a=1}^{N_D} \ff{1}{\l^a_U} \lp   v_U^{(a)}\rp_i \lp w_{U}^{(a)} \rp_j . \label{eq:eigenexpansion}
\ee
For certain values of the couplings that define the theory, there may be  an ``exceptional configuration'', that is a bosonic field configuration $U^*$ such that, for simplicity, strictly one of the eigenvalues, $\l^* \in \s_{D[U^*]}$, vanishes. In what follows, the corresponding left and right eigenvectors of $U^*$ will be denoted by $\lp w^*\rp^T$ and $v^*$ respectively. If such exceptional configurations exist, it can be seen that the standard estimators of physical quantities, such as fermion propagators, diverge. To illustrate this, consider a fermion field bilinear denoted as
\bad
V^1_{ij} = \Psibar_i \Psi_j
\ead 
and choose a particular combination of these bilinears weighted by the left and right eigenvectors at the exceptional configuration
\bad
\co = \sum_{i,j} w_i^* v_j^* V^1_{ij}.
\label{eq:firstO}
\ead
After the fermions are integrated out, for each sample size $N_S$, a standard estimator for the expectation value of $V_{ij}^1$ in a Monte Carlo calculation is 
\begin{equation} 
\hat V^1_{ij} = \frac{1}{N_S}\sum_{t=1}^{N_S} \inv D[U_t]_{ij},
\label{eq:V^1-estimator}
\end{equation}
where  $U_t$ for $t\in\{1,\cds,N_S\}$ are assumed to be independently and identically generated samples. The corresponding estimator for $\co$ is
\bad 
\hat \co_{N_S} = \ff{1}{N_S}\sum_{i,j} w_i^* v_j^* \sum_{t=1}^{N_s}\inv{D}[U_t]_{ij}.
\ead
The mean of $\hat \co_{N_S}$ is given as:
\bad 
\ev{\hat \co_{N_S}} = \ff{1}{Z}\int \cald[U] e^{-S[U]} \det(D[U]) \sum_{i,j} w_i^* v_j^* \inv D[U]_{ij}.
\label{eq:O-estimator-mean}
\ead
As one of the eigenvalues, $\l^*$,  for the field configuration $U^*$ vanishes, the integration measure in Eq.~\eqref{eq:O-estimator-mean} is such that $U^*$ will have vanishing probability of being sampled and consequently  the singularity due to $\inv D[U^*]_{ij}$ will not cause the expectation value to diverge.

Nevertheless, configurations in a neighbourhood\footnote{Precisely, for every $\epsilon > 0$, one can find a neighborhood $\mathcal{N}$ of $U^*$ such that $\sup_{U \in \mathcal{N}} \det D[U] < \epsilon $.} of $U^\star$, which will be sampled with a very small frequency governed by $\det D[U]$, will make large individual contributions to the sample mean but the expectation value will remain finite as $\det[U] \inv D[U]_{ij}$ is polynomial in $U$. To examine ${\rm var}(\co_{N_S})$ we consider  $N_S = 1$ for simplicity, noting that ${\rm var}(\co_{N_S}) = \ff{1}{N_S} {\rm var}(\co_{N_S=1})$. The variance of $\hat \co_1$ is
\bad 
{\rm var} \lp \hat \co_{1} \rp &= \int \cald[U] e^{-S[U]} \det D[U] \abs{\hat \co_1[U] - \ev{\hat \co_1}}^2.
\ead
Since $\hat \co_1[U^*] = \inv{(\l^*)}$ by construction and it was assumed that $\l^* = 0$ is the only vanishing eigenvalue of $D[U^*]$, the variance of $\hat \co_1$ is divergent as $\lp \l^*\rp^{-2}\det[U^*]$ is divergent. It must be stressed that, in an actual Monte Carlo calculation, exceptional configurations will not be sampled so the sample variance will remain finite for any finite sample size, but will not be bounded from above as the sample size increases.

The above example of a single fermion propagator illustrates the way in which infinite variance manifests but is not of physical relevance. However, correlation functions involving hadrons and nuclei in a theory such as QCD involve many propagators that arise from products of $k$ fermion bilinears. In this context, it is useful to consider the more general product
\bad 
V^k_{\{i\},\{j\}} = \prod_{n=1}^k V^1_{i_n,j_n},
\ead 
where $\{i\} \equiv \{i_1,\cds,i_k\}$ and $\{j\}=\{j_1,\cds,j_k\}$ label                         the fermions that enter in an ordered manner. A family of estimators for $V^k_{\{i\},\{j\}}$ analogous to Eq.~\eqref{eq:V^1-estimator} for each $N_S$ is
\bad
\hat V^k_{N_S;\{i\},\{j\}} &= \ff{1}{N_S}\sum_{t=1}^{N_S} \sum_{\pi \in S_k} s_{\pi} 
\prod_{n=1}^k \inv D[U_t]_{i_n,j_{\pi(n)}}
\ead
unless $\{i\}$ and $\{j\}$ contain repeated indices in which case $\hat V^k_{\{i\},\{j\}} = 0$ due to the anti-commutativity of fermions. Here, $S_k$ is the symmetric permutation group of order $k$, and $s_\pi$ is the sign of permutation $\pi$. Again choosing $N_S = 1$, one observes:
\bad
\hat V^k_{1; \{i\},\{j\}} = \sum_{\pi \in S_k} s_{\pi} 
\prod_{n=1}^k \hat V^1_{1;i_n,j_{\pi(n)}}.
\ead
If $N_0>1$ eigenvalues of $D[U^\star]$ vanish, then  it suffices to focus on a product of $N_0$ fermion bilinears:
\bad 
{\cal R}_{N_0} = \prod_{s=1}^{N_0} \sum_{i,j=1}^{N_D} \lp w^\star _{s}\rp_i \lp v^\star_{s}\rp_j V^1_{i,j},
\ead
where $N_D$ is the size of the Dirac matrix and $\lp w^\star _{s}\rp_i$ and $\lp v^\star_s\rp_j$ are the left and right eigenvectors of $D[U^\star]$ respectively, with vanishing eigenvalues $\l^\star_{s}=0$, for $s\in\{1,\cds,N_0\}$. For the estimator
\bad
\hat {\cal R}_{N_0} = \sum_{\pi \in S_{N_0}} \!\! s_{\pi} \!\! \sum_{\substack{i_1 \neq \cds \neq i_{N_0} \\ j_1 \neq \cds \neq j_{N_0}}}
\prod_{s=1}^{N_0} \lp w^* _{s}\rp_{i_s} \lp v^*_{s}\rp_{j_s} \inv D[U_1]_{i_s,j_{\pi(s)}},
\ead
where the first sum is over permutations $\pi$ in the symmetric group $S_{N_0}$.
Using the same arguments as for $\hat{\cal O}_1$, it can be shown that  $\hat {\cal R}_{N_0}$ has infinite variance.

The above arguments illustrate how infinite variances of estimators of physically-relevant quantities can arise in Monte Carlo calculations of theories including lattice QCD. We note that, the situation is exacerbated in quenched QCD, where the fermion determinant is taken to be unity, or in partially-quenched or mixed action QCD, where the Dirac operators entering the measure and the observables are different. In these cases, fermionic  observables can have infinite expectation values. Since the fermion action is different in the measure and in defining observables, similar concerns will arise in partially-quenched or mixed-action QCD. Without knowing that an observable in such a theory is free of the problem illustrated above\footnote{For example, the massive overlap Dirac operator does not have zero eigenvalues.}, standard sampling methods result in estimates of observables whose statistical behaviours are not governed by the CLT at any sample 
size and are unreliable.

\section{Statistical Sampling}
\label{sec:stat-sampling}
In this section, important results for stochastic variables that will be needed in the following analysis are introduced. A  review of the relevant aspects of probability theory and proofs of the results presented here are provided in Appendix \ref{appendix:probability}.

\subsection{A Natural Indicator of Infinite Variance}

For a sequence of independent and identically distributed (i.i.d.) random variables $\{X_n\}$, a sequence of random variables $\{s_n\}$  can be defined such that $s_n = \ff{1}{n-1}\sum_{i=1}^n \lp X_i - \bb X_n \rp^2$ where $\bb X_n = \ff{1}{n}\sum_{i=1}^n X_i$. Each $s_n$ is an unbiased estimator of the variance of $\bb X_n$ when it is finite. The $n\to\infty$ behaviour of $s_n$ provides empirical evidence as to whether the system has a finite variance or not. 
In particular:
\begin{itemize}
    \item Let $\{X_n\}$ be a sequence of  i.i.d. random variables with finite variance $\s^2$. Then as $n\to\infty$, $s_n \cas \s^2$, where the notation ``almost surely'' ($a.s.$) is defined in Appendix \ref{appendix:probability}.
    \item Let $\{X_n\}$ to be a sequence of i.i.d. random variables with finite mean $\m$ and infinite variance. Then, for any given $\delta > 0$, the number of  random variables $s_n$ that satisfies $s_n > \delta$ is infinite $a.s.$. 
\end{itemize}
The former statement follows from the Strong Law of Large Numbers, while the latter statement is proven in Appendix \ref{sec:main-theorems} as \thref{thm:infinite-jump}.

\subsection{Empirical Bias of the Sample Average for Finite Systems with Exceptional Configurations}

In systems that contains exceptional configurations, the convergence of the sample mean to the mean is slow and it is not straightforward to estimate uncertainties as the sample variance does not converge. These issues  resurface as empirical biases in systems with finite configuration spaces with configurations that are sufficiently infrequently sampled.
To explore this, let $\O$ be a finite sample space with $\abs{\O}$ elements. To this space, we associate the $\s$-algebra $\calf = 2^\O$ that is the set of subsets of $\O$, and a family of probability distributions $P^t:\calf \to [0,1]$ for $t \in (0,1]$. Here, $t$ corresponds to a parameter describing the system from which the samples are drawn such as a coupling constant or a mass. For a finite system, the knowledge of $P^t(\{\o\})$ for all $\o \in \O$ completely determines $P^t:\calf \to [0,1]$ through the requirement $P^t(A\in \calf) = \sum_{\o \in A} P^t(\{\o\})$. Therefore, it is enough to consider $P^t(\{\o\})$ and for brevity we define $P^t(\o) \equiv P^t(\{\o\})$. In the following, it is  assumed that $P^t$ is continuous in the sense that $P^t(\o)$ is a continuous function of $t$ for $t \in (0,1]$ for all $\o \in \O$ and that $X^t$ is a non-negative random variable which is continuous in $t$ in the same sense. Furthermore, the set of exceptional configurations is defined as $E \subset \O$ such that $\lim_{t \to 0} P^t(\o) = 0$ and $\lim_{t \to 0} P^t(\o)X^t(\o) \neq 0$ for all $\o \in E$. An element $\o \in E$ is referred to as an exceptional configuration and it should be noted that this definition depends on the choice of $X$ implicitly.

The mean of $X^t$,  $\m_{X^t}$, can be written as a sum of contributions from the exceptional configurations and contributions from the non-exceptional configurations. 
\bad 
\m_{X^t} &= \sum_{\o \in \O} P^t(\o)X^t(\o) = \m^e_{X^t} + \m^{/\!\!\!e}_{X^t},
\ead
where
\bad
\m^e_{X^t} &= \sum_{\o \in E} P^t(\o)X^t(\o), \\
\m^{/\!\!\!e}_{X^t} &= \sum_{\o \in E^c} P^t(\o)X^t(\o),
\ead
and $E^c=\Omega\setminus E$.

For a Monte Carlo estimate of the mean $\m_{X^t}$ with a fixed sample size $N_S$, the contribution from the exceptional configurations will be missing for $t$ sufficiently close to $0$, resulting in a  ``gap''  denoted by $\Delta_X \equiv \lim_{t\to 0} \m^e_{X^t}$. That is, denoting the actual mean of the observable by $\m_X \equiv \lim_{t\to 0}\m_{X^t}$, the sample mean will underestimate this value by $\D_X$ for ensembles that are large but not sufficiently large that the CLT applies, as will be discussed below.

Consider the product space $\O^{N_S}$ corresponding to the set of all ensembles of size $N_S$, that is every element $\o^{[N_S]} \in \O^{N_S}$ will correspond to a sequence of elements from $\O$: $\o^{[N_S]} = \left\{ \o^{[N_S]}_1,\cds,\o^{[N_S]}_{N_S}\right\}$. A new random variable which should be interpreted as the ensemble average for each ensemble can be defined by $\bb X^t_{N_S}\lp \o^{[N_S]}\rp = \ff{1}{N_S}\sum_{i=1}^{N_S}X^t \lp \o^{[N_S]}_i \rp$.

 Now let $p^{e}_{\min} (t) = \min_{\o \in E} P^t(\o)$ and $p^{/\!\!\!e}_{\min} (t) = \min_{\o \in E^\mathsf{c}} P^t(\o)$. As $t \to 0$, $p^e_{\min} (t) \to 0$ while $p^{/\!\!\!e}_{\min} (t) \not \to 0$. Therefore, for small enough $t$ one will have $[p^{/\!\!\!e}_{\min} (t)]^{-1} \ll [p^e_{\min} (t)]^{-1}$. The Weak Law of Large Numbers (see Appendix \ref{appendix:probability}) implies that for $N_S \gg [p^e_{\min}(t)]^{-1}$, $\bb X_{N_S} \simeq \m_X$ with very high probability. However, for $[p^{/\!\!\!e}_{\min} (t)]^{-1} \ll N_S \ll [p^e_{\min} (t)]^{-1}$, $\bb X_{N_S} \simeq \m_X-\D_X$ with very high probability. 
 
 For practical purposes, these results can be summarized by saying that for small $t$, with very high probability, $\bb X_{N_S}^t$ first approaches to $\mu_X - \Delta_X$ and then eventually converges to $\mu_X$ as $N_S$ is further increased. 
The above statements are made precise and proven as \thref{thm:gap} in Appendix \ref{sec:main-theorems}. 
It should be noted that if $E$ includes more than one element, $\bb X_{N_S}$ may demonstrate a series of plateaus before eventually converging to $\mu_X$. Figure \ref{fig:cartoon} schematically demonstrates the expected behaviour.
\begin{figure}
    \centering
    \includegraphics[scale=0.2]{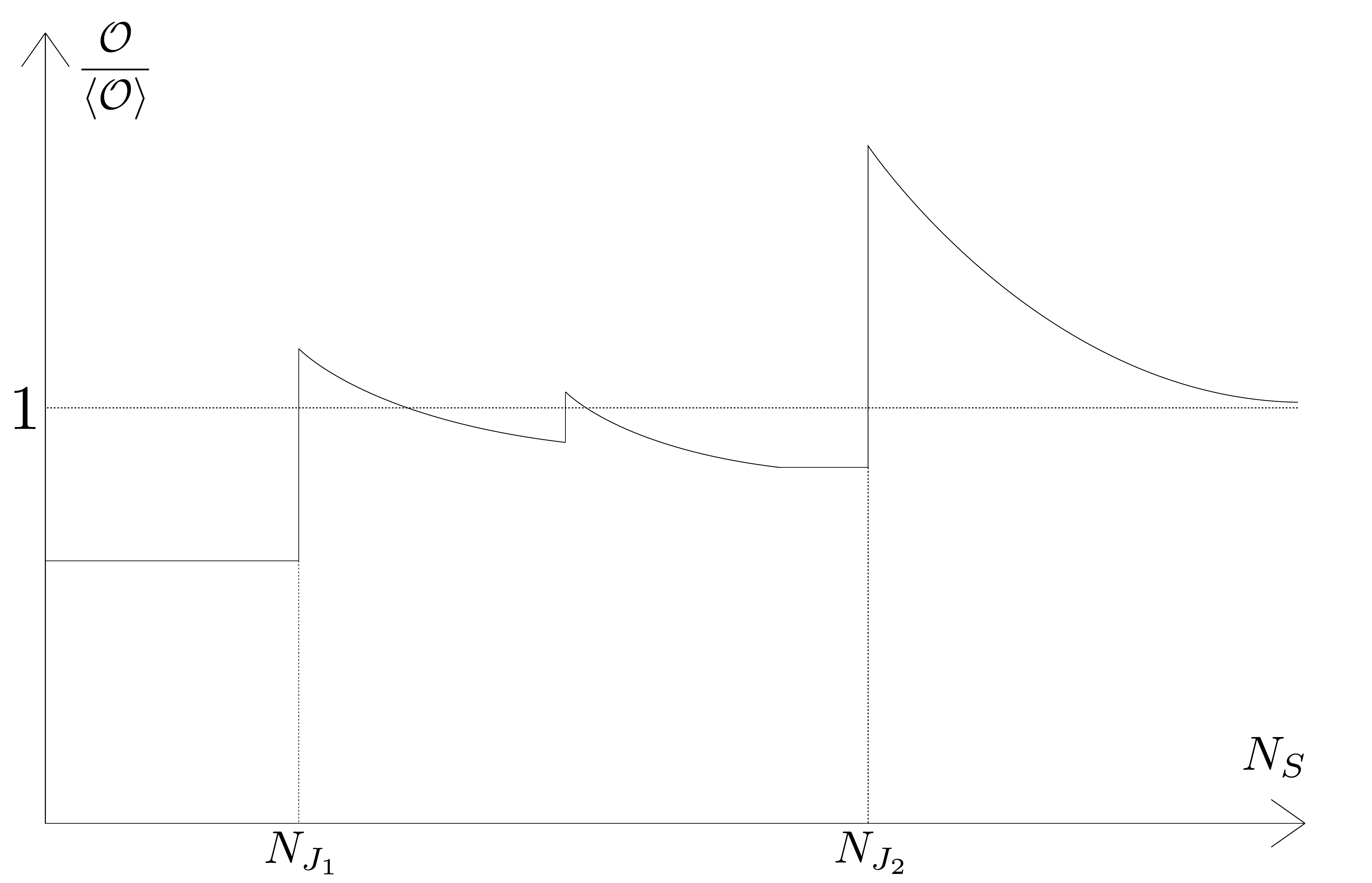}    
    \caption{The expected dependence on the number of samples of an observable with two exceptional configurations  under a particular sampling scheme. The  discontinuities at $N_{J_1}$ and $N_{J_2}$ correspond to the first time that the first and second exceptional configurations are sampled and these values can be arbitrarily large ($>10^8$ in examples below).
    }
    \label{fig:cartoon}
\end{figure}

\subsection{Non-Asymptotic Estimators}
\label{sec:non-asymptotic-estimator}
While the CLT is of utmost importance in statistical analysis, it is only valid asymptotically and for random variables with finite variance (see Appendix \ref{appendix:probability}). Therefore, the CLT is not applicable when dealing with random variables with infinite variance and the standard methods of estimation can not be utilized. Similar issues are also expected for a random variable with finite variance that has infinite variance in a certain limit, as such a variable is expected to be extremely non-Gaussian and require impractically large sample sizes for the CLT to apply.

To address these situations, non-asymptotic estimators are important, and
in this work the Median of Means (MoM) estimator will be used. The MoM is an estimator for which one is able to define confidence intervals which are also valid for random variables with infinite variance. After including the possibility of autocorrelations between samples, the MoM estimator can be defined as follows. 
Let $\{\hat \m_{1},\cds, \hat \m_{K}\}$ be the means of the random variable $X$ on each of $K$ independent batches of $B$ samples of $X$ obtained from the same stationary (thermalised) discrete time process. Then the median of means estimator $\hat \m_{\text{\tiny{MoM}}} = {\rm median}(\{\hat \m_1,\cds,\hat \m_K\})$.
Confidence intervals can be defined using
\bad 
Prob \lp \abs{\m_X- \hat \m_\mom} > 2 \s_X \ss{\ff{2\tt_{int,X}(B)}{B}} \rp \leq e^{-\ff{K}{8}}\,. \label{eq:mom_confidence}
\ead 
where $\m_X$ is the expectation value of $X$, $\s_X$ be the standard deviation of $X$, and $\tt_{int,X}(B)$ is the integrated autocorrelation time of the discrete time process.\footnote{$\tt_{int,X}(B)$ is defined  in the Appendix \ref{appendix:mom}.}
Further details and a proof of the above relation are provided in Appendix \ref{appendix:mom}.

\section{Simple examples with infinite variance}
\label{sec:simple-models}
In this section, two simple models are introduced and  exemplar correlation functions are investigated to illustrate the problem of infinite variance in Monte Carlo sampling. Numerical explorations of these models are presented in Secs. \ref{sec:discrete-hs} and \ref{sec:reweighting} below.

\subsection{Toy Model}
\label{sec:toy}
The first model considered is a zero dimensional (Euclidean)  theory of $2N_f$ interacting fermions represented  by\   $\Psi = \bpm \Psi_1,  \cds, \Psi_{2N_f} \epm^T$ and $\bar \Psi = \bpm \bar \Psi_1, \cds, \bar \Psi_{2N_f}\epm$ where $\Psi_i$ and $\bar \Psi_i$ are independent Grassmannian variables. The Lagrangian of this toy model is defined as
\begin{equation}
  \call = m\bar \Psi \Psi  - \ff{g}{2} \lp \bar \Psi \Psi \rp^2 ,
  \label{eq:toy-lag}
\end{equation}
where it is assumed that $g$ is positive. As shown in Ref.~\cite{original},  positivity of $g$ is required for the unitarity of realistic theories with four fermion interaction. 

The partition function of this theory coupled to (Grassmannian) sources $\bb\h$ and $\h$ is given by:
\begin{equation}
  Z[\h,\bb \h] = \int \prod_{i=1}^{2N_f} \lp d\Psi_i d\bb\Psi_i\rp e^{-m\bar \Psi \Psi  + \ff{g}{2} \lp \bar \Psi \Psi \rp^2 + \bb \h \Psi + \bb \Psi \h}.
\end{equation}
To calculate quantities derived from this partition function, one needs to remove the quartic term so that the Grassmannian integrations can be performed exactly. The standard way to do this is to introduce an auxiliary field through a (continuous) Hubbard-Stratonovich transformation \cite{hubbard,stratonovich}. It is straightforward to see that up to a multiplicative constant, the partition function is equivalent to
\begin{equation}
Z[\h,\bb \h] = \int_{-\infty}^\infty\!\! d\phi \int \prod_{i=1}^{2N_f} \lp d\Psi_i d\bb\Psi_i\rp e^{-\ff{1}{2}\phi^2 -(m+\ss g\phi)\bar \Psi \Psi + \bb \h \P + \bb \P \h},
\end{equation}
where $\phi$ is a real-valued scalar field.
The fermions can now be integrated exactly, leading to
\begin{equation}
Z[\h,\bb \h] = \int_{-\infty}^\infty d\phi\, e^{-\ff 1 2 \phi^2 + \bb \h \ff{1}{m+\ss g\f}\h} (m+\ss g \phi)^{2N_f}.
\end{equation}
Here, the Boltzmann weight 
\be
P(\phi) \propto e^{-\ff{1}{2}\phi^2}\lp m + \ss g \phi \rp^{2N_f}
\ee
is common to the partition functions and all correlation functions derived from it and therefore acts as the probability weight in importance-sampling Monte Carlo calculations. 

Now suppose that  one is interested in calculating the expectation value of the observable 
\begin{equation}
\co = \prod_{i=1}^{2N_f} \bar \Psi_i \Psi_i
\label{eq:eqOprod}
\end{equation}
which is determined by
\begin{equation}
  \lc \co \rc = \ff{1}{Z[0,0]}\lp \prod_{i=1}^{2N_f} \ff{\pp}{\pp \h_i}\ff{\pp}{\pp \bb \h_i} \rp Z[\h,\bb \h]\Bigg{\rvert}_{\h,\bb \h =0} \cd
\end{equation}
Using the auxiliary field, this is given by
\bad
  \lc \co \rc &= \ff{\int d\phi\, P(\phi) \lp m+\ss g\phi \rp^{-2N_f}}{\int d\phi P(\phi)} \\
  &= \ff{\int d\phi\, e^{-\ff 1 2 \phi^2}}{\int d\phi e^{-\ff{1}{2}\phi^2} \lp m+\ss g\phi \rp^{2N_f}},
\ead
which is clearly finite. 

Difficulties arise if this quantity is naively estimated through a Monte Carlo calculation. The standard estimator for this expectation is
\begin{equation}
\hat \co  = \ff{1}{N_S}\sum_{n=1}^{N_S} \tilde\co(\phi_n),
 \label{eq:continuous-estimator}
\end{equation}
where $N_S$ is the sample size and
\be
\tilde\co(\phi) = \lp m+ \ss g \phi \rp^{-2N_f}
\ee
is the representation of the observable in terms of the auxiliary field. 
This quantity has a singularity at $\phi^* = -\ff{m}{\ss g}$. While one will never sample this point because $P(\phi^*)=0$, with sufficiently many samples one will sample nearby points and they will cause large fluctuations in the estimation of the observable. In fact, the variance of this estimator is divergent, as the second moment (and all higher moments) of the bosonic operator $\tilde\co$ diverges:
\bad
\lc \co^2(\phi) \rc &= \ff{\int d\phi\, P(\phi) \lp m+\ss g\phi \rp^{-4N_f}}{\int d\phi P(\phi)} \\
&= \ff{\int d\phi\, e^{-\ff 1 2 \phi^2}\lp m+\ss g\phi \rp^{-2N_f}}{\int d\phi e^{-\ff{1}{2}\phi^2} \lp m+\ss g\phi \rp^{2N_f}} 
\\
&= \ii.
\ead


\subsection{Gross-Neveu Model}

To further explore the ideas  introduced above, it is useful to consider the $N_f$-flavour Gross-Neveu (GN) model \cite{original} which resembles QCD in a number of ways. In particular, it is asymptotically free and exhibits chiral symmetry breaking.\footnote{The version of the model introduced here has a discrete chiral symmetry but it is simple to modify the action to obtain a theory with a continuous chiral symmetry \cite{original}.}

Here, the Gross-Neveu model is defined in two dimensions on a discretised lattice geometry with Wilson fermions \cite{wilson-fermions}. Consider a rectangular lattice, described by the points $\{(s,t) | 1 \leq s \leq L,1 \leq t \leq T\}$ where $s$, $t$, $L$ and $T$ are positive integers and  lattice units are assumed throughout. Periodic (anti-periodic) boundary conditions are implemented in space (time). In this work, two-dimensional Dirac matrices are defined as
\begin{equation}\begin{aligned}
\g_0 &= \bpm 1 & 0 \\ 0 & -1 \epm, \qquad
\g_1 &= \bpm 0 & 1 \\ 1 & 0 \epm .
\end{aligned}\end{equation}
Denoting the masses by $m_i$ and the coupling constant by $g$, the partition function of the GN model is given by
\bad
Z =\int &\lp \prod_{s,t,i} d \bar \p_i d \p_i (s,t) \rp \\
&{\rm exp} \left\lbrace -\sum_{s,t,i} \bar \psi_i(s,t) K_i(s,t;s',t') \p_i(s',t') \right.
\\ &\qquad\qquad \left.+\ff{g}{2} \sum_{s,t} \lp \sum _i \bar \p_i(s,t)\p_i(s,t) \rp^2 \right\rbrace ,
\label{eq:ZGN}
\ead
where, $1 \leq i \leq N_f$ and
\begin{equation}
\begin{aligned}
K_i(s,t;s',t') =& \mathds{1}_{2 \times 2} \bigg( (2+m_i)\d_{s,s'}\d_{t,t'}  \\ &-\ff{1}{2}\big( \d_{s,s'+1}\d_{t,t'} + \d_{s,s'-1}\d_{t,t'} \\ &+ \d_{s,s'}\d_{t,t'+1}+ \d_{s,s'}\d_{t,t'-1} \big) \bigg) \\ &+ \ff{1}{2} \g_0 \big( \d_{s,s'}\d_{t,t'+1} - \d_{s,s'}\d_{t,t'-1}\big)  \\ &+\ff{1}{2} \g_1 \big(\d_{s,s'+1}\d_{t,t'} - \d_{s,s'-1}\d_{t,t'} \big).
\end{aligned}
\end{equation}
In the current work, $N_f=2$ flavours of fermions are used everywhere with $m_1=m_2=m$. By utilizing a Hubbard-Stratonovich transformation, the exponential in Eq.~\eqref{eq:ZGN} can be made bilinear in the fermion fields as in Sec.~\ref{sec:toy}. 
Indeed, the toy model in Sec.~\ref{sec:toy} is an approximation to the Gross-Neveu model in which the kinetic terms in the action are ignored.\footnote{In this approximation, Grassmannian variables at different sites are decoupled from each other and the GN model reduces to independent products of the toy model on each site.}

The set of exceptional configurations in the GN model is  more complicated than in the toy model discussed in the previous subsection. In particular, the exceptional configurations will correspond to a union of surfaces of codimension $1$ (and higher). 
For $L\times T=2\times 2$, the set of the exceptional configurations can be found algebraically by solving the characteristic equation of the Dirac operator for a given  set of parameters and is composed of two and three dimensional surfaces in the four-dimensional field-space. For larger lattice geometries, determination of these surfaces can in principle be performed numerically.

\section{Discrete Hubbard-Stratonovich Transformation}
\label{sec:discrete-hs}

The failure of sampling for some quantities with the standard HS transformation is tied to the continuous values taken by the auxiliary field, necessitating the existence of exceptional configurations in the models of the previous section. To avoid this, a family of discrete HS sampling schemes is introduced in this section and their utility in ameliorating the infinite variance problem is investigated numerically.

As introduced above, the continuous Hubbard-Stratonovich transformation is  given by
\begin{equation}
  e^{\ff 1 2 \chi^2} = \ff{1}{\ss{2\pi}} \int_{-\infty}^\infty du\, e^{-\ff 1 2 u^2 +  u \chi}.
  \label{eq:HS}
\end{equation}
This expression is valid for all commuting variables $\chi$. However, if $\chi$ is constructed out of fermion bilinears as in the models in Sec.~\ref{sec:simple-models}, Eq.~\eqref{eq:HS} need only  be satisfied up to terms ${\cal O}(\chi^{2N_f})$  (where $N_f$ is the number of fermions for the theories that have spinor dimension $2$) since higher powers of $\chi$  vanish identically.

To find additional solutions, solutions of
\begin{equation}
  e^{\ff 1 2 \chi^2} = \sum_{a\in{\cal A}} w_a e^{\xi_a \chi}
  \label{eq:discrete-cond}
\end{equation} 
are required,
where the index $a$ takes values in a finite  index set ${\cal A}$ that is to be determined.
The weights, $w_a$, are required to be non-negative to have a probabilistic representation and the $\xi_a$ are required to be real to avoid a sign problem. $\chi$ is assumed to satisfy $\chi^{2N_f+1} = 0$. 

After a change of variables $\chi \to i \chi$, solving the above equation is equivalent to solving
\begin{equation}
  e^{-\ff 1 2 \chi^2} = \sum_a w_a e^{i\xi_a \chi} +\co\lp \chi^{2N_f+1}\rp,
\end{equation}
where $\chi$ is considered as a real variable. That is, the above equation may be interpreted as the equality of the two real power series in $\chi$ up to the $2N_f$th order in $\chi$.
The series on the left and right sides of the above equation can be viewed as the characteristic functions\footnote{The characteristic function of a random variable $X$ is defined as $\f(\xi) = \lc e^{i \xi X} \rc.$} of two probability densities in a conjugate variable $\xi$, where these densities are 
\be
P_1(\xi)=\ff{1}{\ss{2\pi}}e^{-\ff 1 2 \xi^2}
\ee
and 
\be
P_2(\xi)=\sum_a w_a \d(\xi-\xi_a),
\ee
respectively. Eq.~\eqref{eq:discrete-cond} can thus be rephrased as finding a  polynomial $f(\xi)$ of degree at most $2N_f$ that  satisfies
\begin{equation}
 \ff{1}{\ss{2\pi}}\int_{-\ii}^{\ii} d\xi\, e^{-\ff 1 2 \xi^2}f(\xi) = \sum_a w_a f(\xi_a).
\end{equation}
Written in this form, the $\{\xi_a\}$ and $\{w_a\}$ can be found through the method of Gaussian quadrature. Denoting the Hermite polynomials by
\begin{equation} 
\he_n(\xi) = (-1)^n e^{\ff 1 2 \xi^2}\frac{d^n}{d\xi^n}e^{-\ff 1 2 \xi^2},
\end{equation}
the $N_f+1$ roots of $\he_{N_f+1}(\xi)$ give the $\xi_a$ and the $w_a$ are constructed as
\begin{equation}
  w_a = \ff{N_f!}{\he'_{N_f+1}(\xi_a)\he_{N_f}(\xi_a)}, \label{eq:hermite-w-def}
\end{equation}
as shown in Appendix \ref{app:hermite}. 

Having defined the sets $\{\xi_a\}$ and $\{w_a\}$, a Monte Carlo calculation can be performed for a Euclidean field theory as follows. Assume that the theory has a partition function:
\bad 
Z = \int \cald[U]\cald[\Psi\Psibar ] e^{-S[U]+\Psibar^{\a}_x D[U]_{\a x;\b y} \Psi^\b_y + \ff{1}{2} \sum_{x}\lp C_{\a\b} \Psibar^\a_{x} \Psi^\b_{x}\rp^2},
\ead 
where $\{\a,\b\}$ correspond to all fermion indices except the spacetime location $x=(s,t)$ and $C_{\a \b}$ is a complex matrix. If $C_{\a \b}\Psibar^\a_{\vec x,\tt}\Psi\b_{\vec x,\tt}$ is a sum of $k$ fermion bilinears, then $\lp C_{\a\b} \Psibar^\a_{\vec x,\tt} \Psi^\b_{\vec x,\tt}\rp^{k+1}$ will vanish. Then, the partition function can be expressed as:
\bad 
Z &= \int \cald[a]\cald[U]\cald[\Psi\Psibar] \prod_{x} w_{a_x} \\ &\qquad e^{-S[U]+\Psibar D[U] \Psi + \sum_{x} \xi_{a_{x}} C_{\a\b} \Psibar^\a_{x} \Psi^\b_{x}},
\ead
where $\cald[a] \equiv \prod_{x} \sum_{a_{x} \in \cala}$ and $\cala$ indexes the set of roots of $He_{N> k}$. Note that $N$ can be chosen to be any integer greater than $k$. 

After integrating over the fermion fields, one obtains:
\bad 
Z &= \int \cald[a]\cald[U]  e^{-S[U]}\det \lp D'[U,a]\rp\prod_{x} w_{a_{x}},
\ead
where $D'[U,a]_{\a,x;\a',x'}$ is given as
\bad 
D'[U,a]_{\a,x;\a',x'}= D[U]_{\a,x;\a',x'} + C_{\a\a'}\xi_{a_{x}}\d_{x,x'}.
\ead 
Consequently, one can  perform a Monte Carlo calculation using  $e^{-S[U]}\det \lp D'[U,a]\rp \prod_{x} w_{a_{x}}$ as the probability weight. 

The family of discrete HS transformations introduced here generalises the transformation first proposed by Hirsch~\cite{hirsch} and used extensively in the context of Quantum Monte Carlo simulations~\cite{Drut,Wu:2005zzb}. The form used in that work is equivalent to the $N_f=1$ case of the  transformation introduced above.

\subsection{Discrete Sampling vs.\ Continuous Sampling for the Toy Model}

In this section,  the toy model discussed in Sec.~\ref{sec:simple-models} is used to compare estimators based on  discrete HS transformations to each other and to the standard estimator based on the continuous HS transformation. The operator 
$\co = \prod_{i=1}^{2N_f}\bar \Psi_i \Psi_i$ in Eq.~\eqref{eq:eqOprod} 
combines fermion bilinears for each type of fermion in the model and provides a concrete example on which to focus. $N_f=2$ will be used in numerical studies.

The behaviour of the different estimators is determined by the model parameters $m$ and $g$ in Eq.~\eqref{eq:toy-lag}. The behaviour of the continuous estimator has been discussed above.
For the discrete HS-based estimators, the choice of $m$, $g$ and the order $N$ of the Hermite polynomial $He_N$ control the magnitude and probability of the least probable configuration. The roots and the corresponding weights for the first few Hermite polynomials are given in Table \ref{tab:Hermitelow}.
\begin{table}[!t]
\begin{tabular}{ |c|c|c| }
\hline
n & Roots $\xi_a^{(n)}$ & Weights $w_a^{(n)}$ \\ \hline
\multirow{2}{*}{2} & $-1$ & $1/2$ \\
 & $1$ & $1/2$ \\
 \hline
\multirow{3}{*}{3} & $-\ss 3$ & $1/6$ \\
 & $0$ & $2/3$ \\
 & $\ss 3$ & $1/6$ \\
\hline
\multirow{4}{*}{4} & $-\ss{3+\ss 6}$ & $1/12 \lp 3 - \ss 6 \rp$ \\
 & $-\ss{3-\ss{6}}$ & $1/12 \lp 3 +\ss 6 \rp$ \\
 & $\ss{3-\ss{6}}$ & $1/12 \lp 3 +\ss 6 \rp$ \\
 & $\ss{3+\ss 6}$ & $1/12 \lp 3 - \ss 6 \rp$ \\ 
\hline
\end{tabular}
\caption{\label{tab:Hermitelow}
Roots and weights of the $N\in\{2,3,4\}$ sampling schemes. Corresponding results for larger values of $N$ are given in Appendix \ref{app:hermite}.}
\end{table}

For the continuous HS estimator, Eq.~\eqref{eq:continuous-estimator}, samples are generated through the Metropolis-Hastings algorithm with the standard normal distribution chosen as the proposal distribution.
Discrete HS estimators are constructed for $He_N$ where $N \in\{3,\ldots,9\}$ with samples drawn through the Metropolis-Hastings algorithm with the weights given in Eq.~\eqref{eq:hermite-w-def} chosen as the proposal probabilities. For each sampling scheme, a total of $N_S=10^8$ samples are created for $m\in\{1.03,1.43,1.53,1.63,1.73,1.83,1.83,2.03,2.43\}$ and for $g = 1.0$. Autocorrelations are measured using the procedure of Ref.~\cite{wolff2004monte} and accounted for in the analysis. 

In what follows, the numerical data are analysed in $N_p=10^3$ steps by adding $10^5$ samples at each step. Precisely, at the step $k$, the samples that are included  are the set $\{1,\cds,k \cd 10^5\}$.  For each step  the data is analysed disregarding the samples not included and all metrics, including the autocorrelation times, are calculating independently for each step. 

In order to compare methods, the behaviours of the mean and the standard deviation of the continuous and discrete HS estimators are considered as a function of the sample size.
\begin{figure}[!t]
	\begin{subfigure}{1.0\linewidth}
    	\includegraphics[width=\columnwidth]{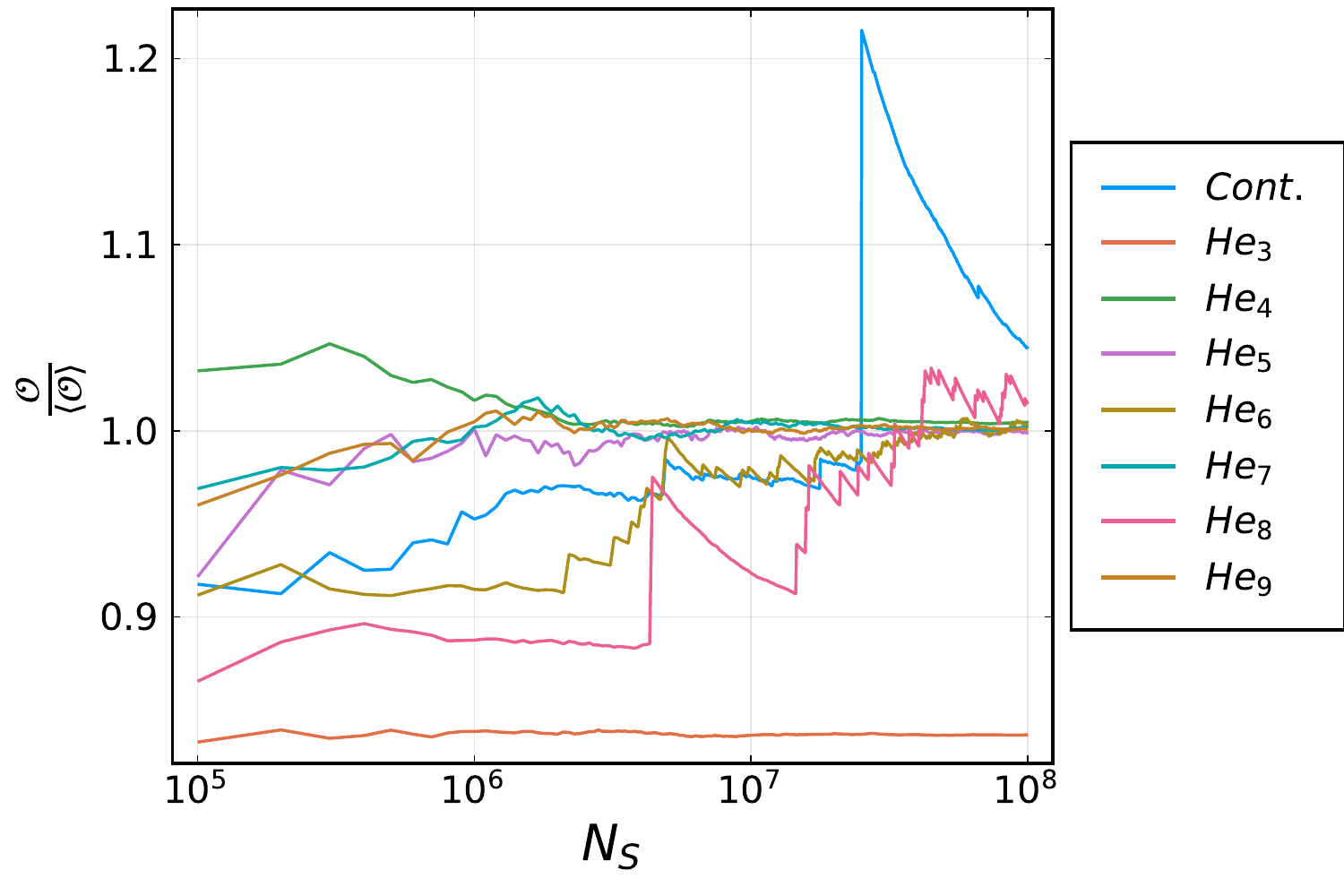}
    	\newsubcap{(a)}
    	\label{fig:toy_mean}
	\end{subfigure}
 \begin{subfigure}{1.0\linewidth}
	\includegraphics[width=\columnwidth]{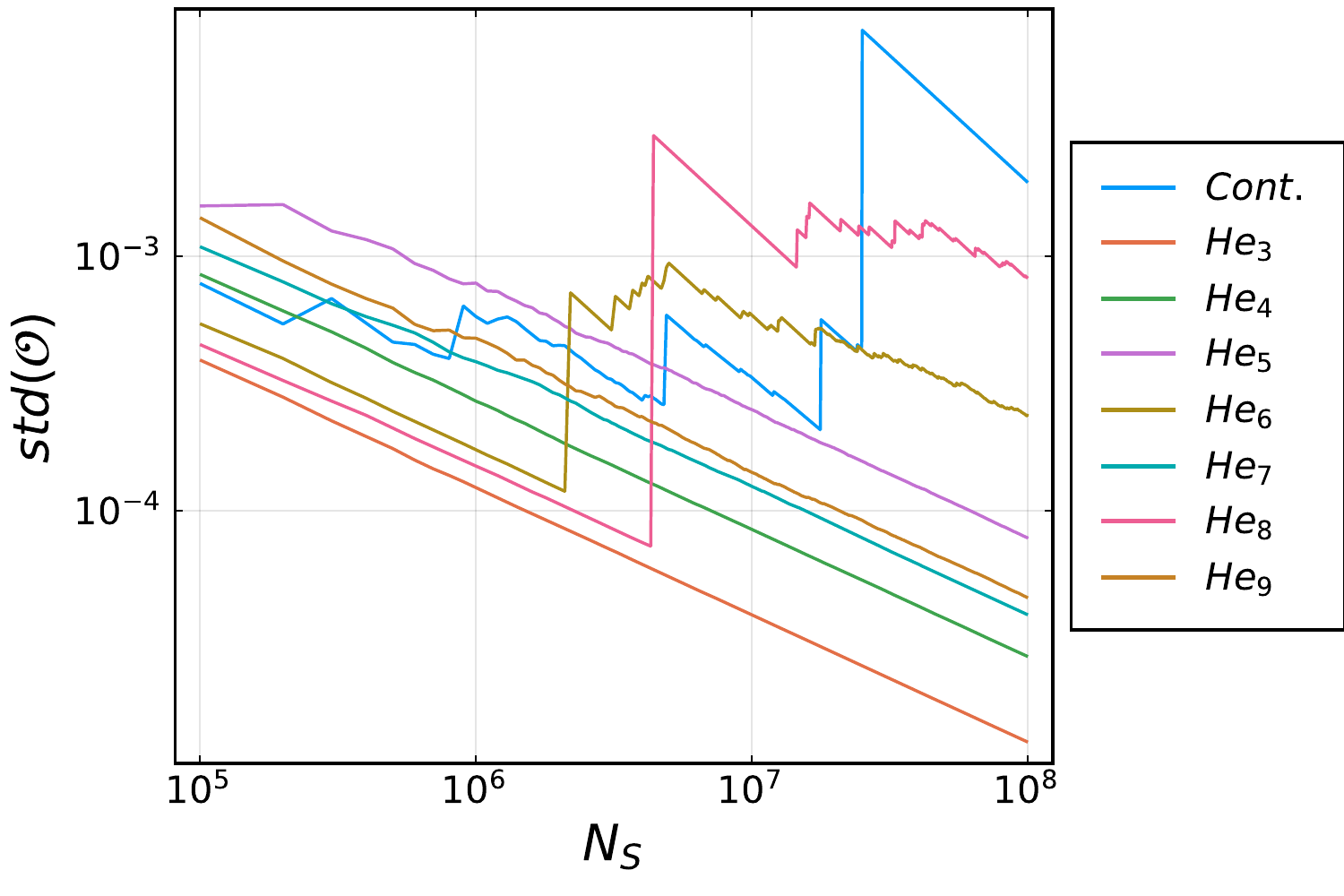}
	\newsubcap{(b)}
    \label{fig:toy_std}
	\end{subfigure}
	\caption{\capa shows the ratio of the sample mean of $\co$ in Eq.~\eqref{eq:eqOprod} to its exact value vs.\ sample size for $m=1.73$, $g = 1.0$ and $N_f =2$ with various sampling schemes for the toy model. \capb shows the standard deviation of $\co$, ${\rm std}(\co)$, as a function of the sample size for the same parameters. 
		\label{fig:toy}
		}
\end{figure}
Figure \ref{fig:toy} shows this comparison for $\he_N$ for $N \in\{3,\ldots,9\}$ at $m=1.73$ and $g=1.0$. These couplings are chosen such that the exceptional point $\phi^*=-m/\sqrt{g}$ is very close to one of the configurations in the $\he_3$ estimator ($\phi=-\sqrt{3}$). As can be seen from the behaviour of the mean, most of the discrete HS estimators rapidly converge to the exactly calculable value that is used to normalize the Monte-Carlo results. However, the continuous HS estimator shows significant jumps as the number of samples increases that occur whenever a sample sufficiently close to $\phi^*=-1.73$ is chosen, as expected from the general arguments in Sec. \ref{sec:stat-sampling}. Note that the binning of results in steps of $N_p=10^3$ has a smoothing effect on the mean; unbinned results show  more frequent and larger jumps. The $\he_3$ discrete sampling rapidly converges, but is biased even for $10^8$ samples. The $\he_8$ estimator also samples configurations close to $\phi^*$ (but not as close as for $He_3$) and correspondingly individual samples of these points significantly modify the mean, leading to the discontinuous jumps shown in the figure. The logarithm of the standard deviation shows the expected $1/\sqrt{N_S}$ behaviour for most of the discrete HS estimators, however the continuous HS estimator, and to some extent the $\he_8$ estimator, exhibits non-asymptotic scaling arising from samples close to $\phi^*$. As the number of samples increases, the continuous HS estimator will sample configurations arbitrarily close to $\phi^*$ and the non-asymptotic behaviour will persist indefinitely: the mean is not guaranteed to converge to the true value for any finite sample set and the variance will not monotonically decrease. This behaviour is anticipated by \thref{thm:infinite-jump} in Appendix \ref{appendix:probability} which shows that the large jumps observed in the variance will never cease.

The behaviour seen for the $\he_3$ and $\he_8$ estimators is in line with expectations given the configurations that are sampled and their respective probabilities. $\he_8$ has a root $t \simeq -1.63652$ that is close to the exceptional configuration $\phi^*=-1.73$, and consequently  the  $\he_8$ results show many jumps. This root of $\he_8$ is sampled with a probability $p \simeq 3\times 10^{-7}$ and is thus sampled about $30$ times for a sample size of $N_S = 10^8$. Supporting this expectation, it is observed that the first jump emerges around $N_S \sim \ff 1 p$ with the subsequent jumps are less marked.
For $\he_3$ discrete sampling, the variance is apparently behaving asymptotically, falling as $1/N_{S}$, despite the empirical bias observed in the mean. 
For this sampling, the root $t_a=-\sqrt{3}$ is sampled with probability $p \simeq 10^{-13}$. Since this root has not been chosen in the $N_S=10^8$ samples used in Fig.~\ref{fig:toy}, the mean is significantly underestimated. 
For $N_s \gtrsim 10^{13}$, the sample mean will begin to converge to the true value and the variance will exhibit jumps (as seen for $\he_8$).
For $N_S \gg 10^{13}$,  $t_a=-\sqrt{3}$ will be sampled representatively and the mean will converge to the correct value and the variance will decrease asymptotically.  While for this case the empirical bias would be observed with a very high probability if the same numerical experiments were repeated, it is not strictly a bias. With a very low probability the mean will be overestimated enormously making the estimator unbiased.

As this particular example shows, in the case of random variables with very large  variance,  asymptotic scaling of the variance is no guarantee of correctness. 
If the model parameters $m$ and $g$ are chosen such that the exceptional configuration is one of the roots of a given discrete HS sampling, the corresponding configuration will never be sampled, just as in the case of continuous HS sampling. Under these circumstances, the variance will decrease as $1/N_S$ but the mean will be biased. 

\begin{figure}[!t]
 	\begin{subfigure}{1.0\linewidth}
    	\includegraphics[width=\columnwidth]{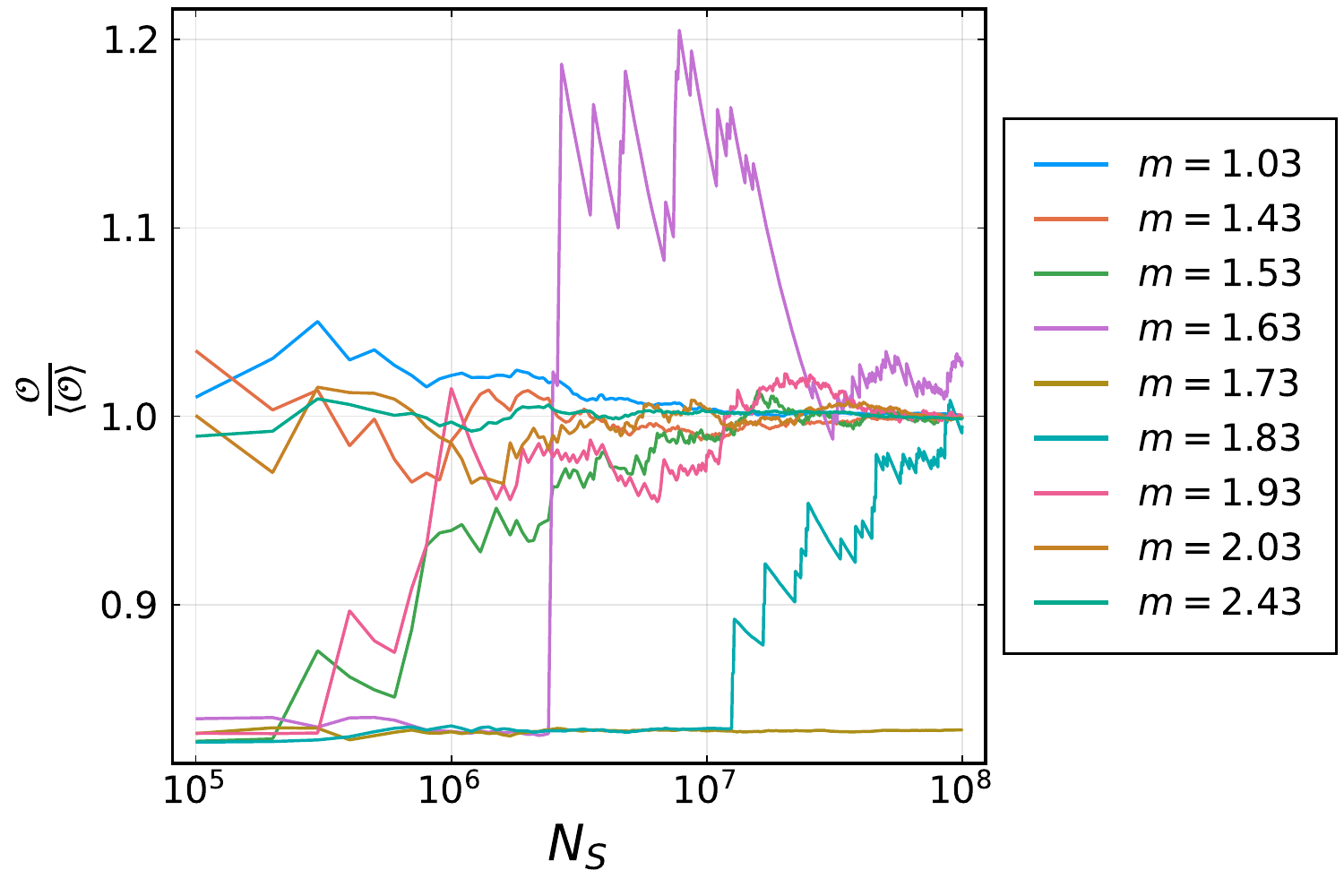}
    	\newsubcap{(a)}
    	\label{fig:hermite_3_mean_all}
	\end{subfigure}
	\begin{subfigure}{1.0\linewidth}
	\includegraphics[width=\columnwidth]{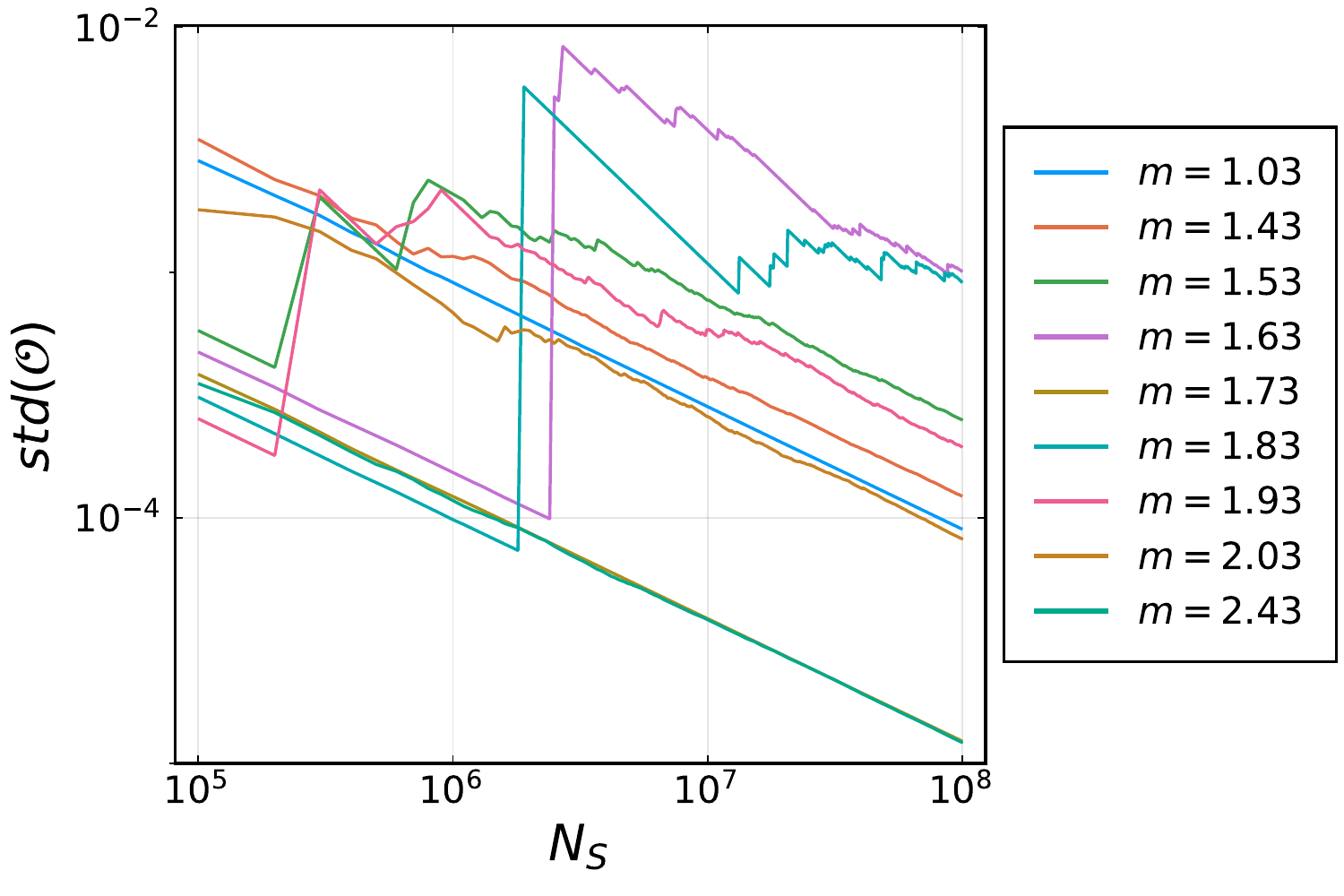}
	\newsubcap{(b)}
    \label{fig:he_3_std}
	\end{subfigure}
	\caption{\capa shows the ratio of the sample mean of $\co$ in Eq.~\eqref{eq:eqOprod} to the exact value  vs.\ sample size for various $m$ for $g = 1.0$ and $N_f =2$. The $\he_3$  discrete sampling scheme is used. \capb shows the standard deviation of $\co$ as a function of the sample size for the same parameters.
	\label{fig:2}}
\end {figure}

In Fig.~\ref{fig:2}, the mean and standard deviation of the same observable are studied for the $\he_3$ discrete HS sampling from $g=1.0$ and for a range of values of $m\in[1.03,2.43]$. As can be seen, for masses such that the exceptional value $\phi^*=-m/\sqrt{g}$ is not close to one of the roots $t_a\in\{-\sqrt{3},0,\sqrt{3}\}$, the calculations converge quickly to the correct value as the number of samples is increased and display the expected asymptotic $1/N_S$ scaling of the variance. However as $\phi^*$ moves closer to the root at $t_a=-\sqrt{3}$ from either above ($m=1.63$) or below ($m=1.83$), the convergence to the true value is much slower and large jumps are seen in the variance each time this root is sampled.
For $m=1.73$, the results apparently converge rapidly with $1/N_S$ scaling, but to an incorrect result at this number of samples (as in the previous figure). 

In fact, using the CLT, a sample size satisfying the conditions for $N(\d,\e)$ in Eq.~\eqref{eq:thmgap} can be found. For the current problem, $N(\d,\e)$ is a lower bound, such that for all $N \geq N(\d,\e)$ there is a range of mass values $m_l(N) \leq m \leq m_u(N)$ where $m_l(N) < \ss 3 < m_u(N)$ such that $P\lp \abs{\hat \co_{N} - (\m - \D)} \leq \d \rp \geq 1-\e$. To obtain a concrete value, we choose $\e = 3\times 10^{-7}$ corresponding to $5$ standard deviations for the standard normal distribution. Then, $N(\d,\e)$ can be chosen as $\lp \ff{\s_{X^0}}{\m-\D}\inv \Phi(1-\ff \e 2)\rp^2 \ff{1}{r^2} \approx \ff{59.07}{r^2}$ where $r = \ff{\d}{\m-\D}$ is ratio of the deviation $\d$ to the biased mean $\m-\D$ and $\Phi(x)$ is the cumulative distribution function of the standard normal distribution. For $r=0.01$, $N(\d,\e) = 6\times 10^5$  satisfies the required conditions.

To further investigate how the $\he_3$ discrete sampling behaves as the exceptional point of the theory moves towards one of the roots, the convergence of the sample average normalised to the true value is studied for $g=1.0$ and $m\in\{1.73,1.76,1.79,1.82\}$. In this simple toy model, the expected deviation of the sample mean  arises from the contribution of just one root that is the least probable and is straightforward to determine. 
Figure \ref{fig:he_3_transition} presents the results and shows that as the exceptional point approaches a root, the number of samples needed to remove the empirical bias increases, scaling approximately as the inverse probability of the least probable root. 

\begin{figure}[!t]
\includegraphics[width=\columnwidth]{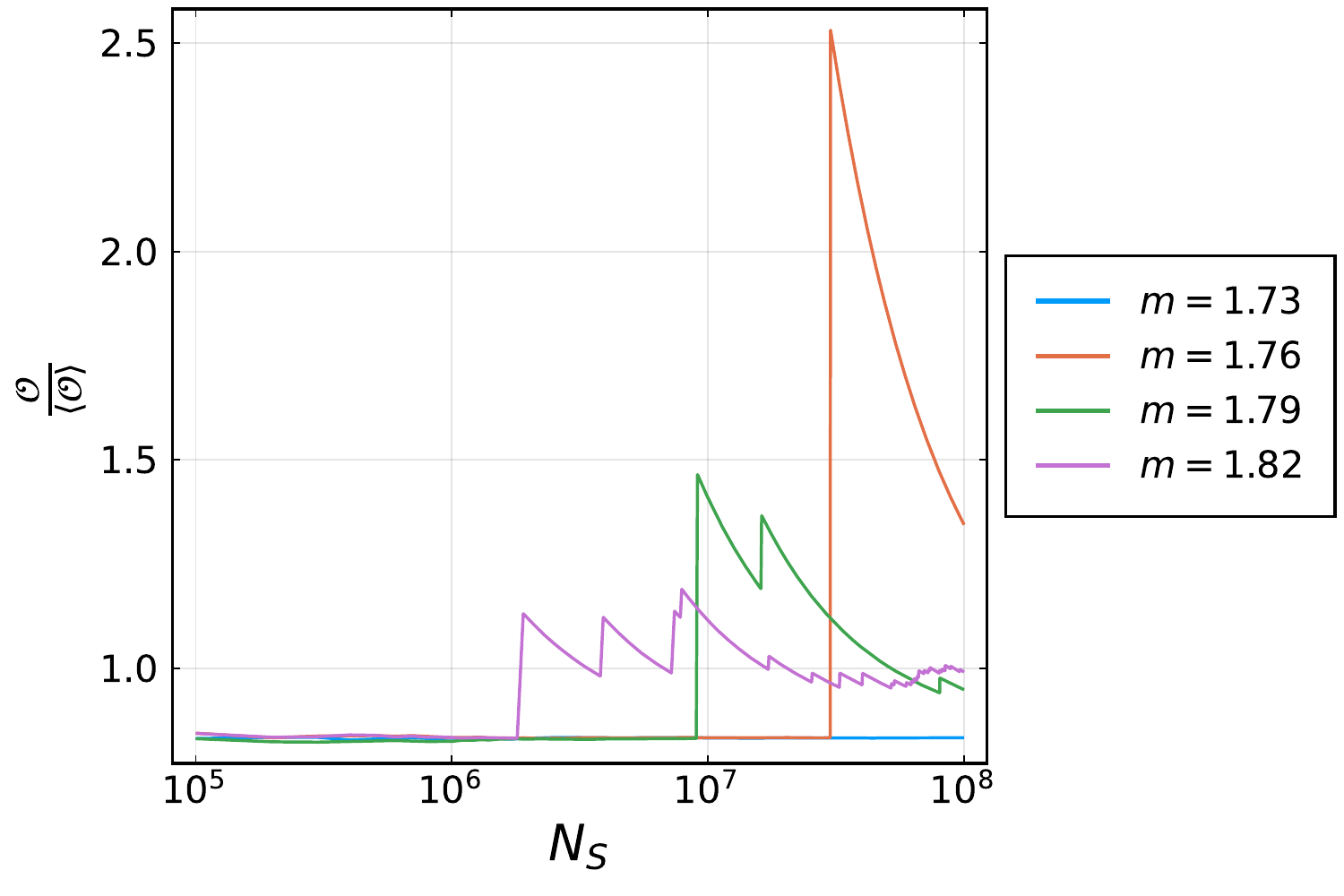}
\caption{The figure shows how (normalized) sample average moves over from $1-\ff{\D}{\m}$ to $1$ as $m$ deviates from $m=\ss 3$.
\label{fig:he_3_transition}}
\end{figure}

\subsection{Discrete Sampling for the Gross-Neveu model}
\label{sec:GN-discrete}

In this section, the effects of infinite variance are investigated in the context of the GN model. 
Calculations are undertaken for $N_f=2$ flavours of fermions and for various values of the fermion mass, $m$, and coupling, $g$. For a lattice of size $L\times T$ using the $\he_N$ discrete sampling there are $N^{LT}$ possible configurations.
\begin{figure}[!t]
\includegraphics[width=1.0\columnwidth]{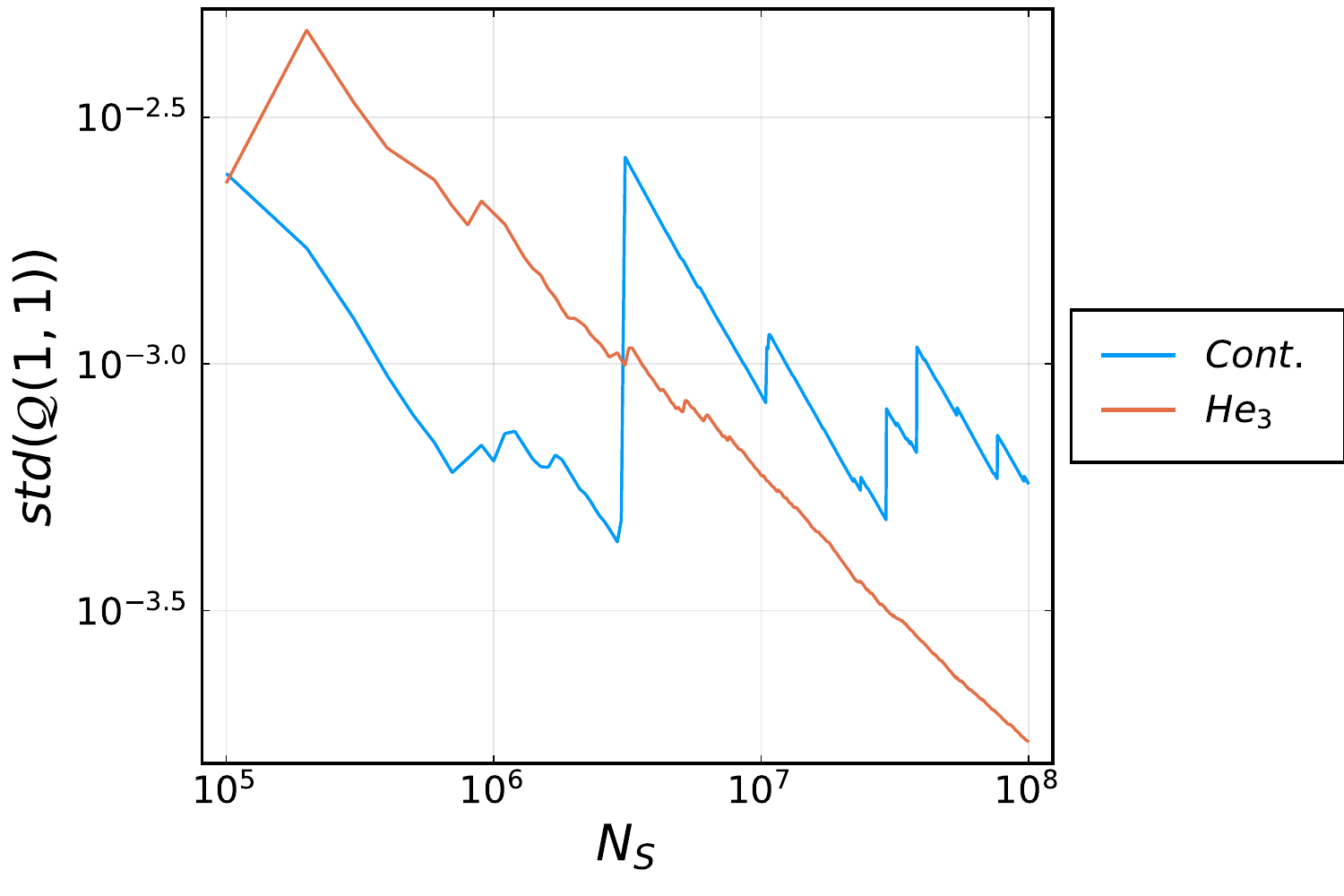}
\caption{Standard deviation of $\cq(1,1)$ vs.\ the sample size for a $2\times 2$ lattice, with $m=1.73$, $g = 1.0$ and $N_f =2$ using the $\he_3$ discrete sampling scheme for the Gross-Neveu model.}
\label{fig:gn_2x2_std}
\end{figure} 
As concrete examples, lattices of extent $L=T\in\{2,\ldots,8\}$ are investigated using the continuous and discrete $\he_3$ sampling schemes. 
For $L=T=2$, Fig.~\ref{fig:gn_2x2_std} shows the sample size dependence of the 
logarithm of the standard deviation of the observable
\be 
\cq(s,t) = \prod_{i=1}^{N_f} \prod_{\sigma=\uparrow,\downarrow}\Psibar_i^\sigma(s,t) \Psi_i^\sigma(s,t),
\label{eq:gn-op-def}
\ee
where the second product is over the fermion spin components. This quantity is evaluated at a single site, chosen to be $(s,t)=(1,1)$, and is constructed from all spin and flavour components of the fermion field at that site.\footnote{
Due to the spin-flavour symmetry of the model, this quantity involves a single eigenvalue entering with multiplicity $2N_f$. By translational symmetry, $\cq(s,t)$ is identical for any site.}
While the slope converges to $-0.5$ for the discrete sampling scheme, the standard deviation of the continuous scheme exhibits large jumps over the entire $N_S=10^8$ samples. 
\begin{figure}[!t]
\includegraphics[width=1.0\columnwidth]{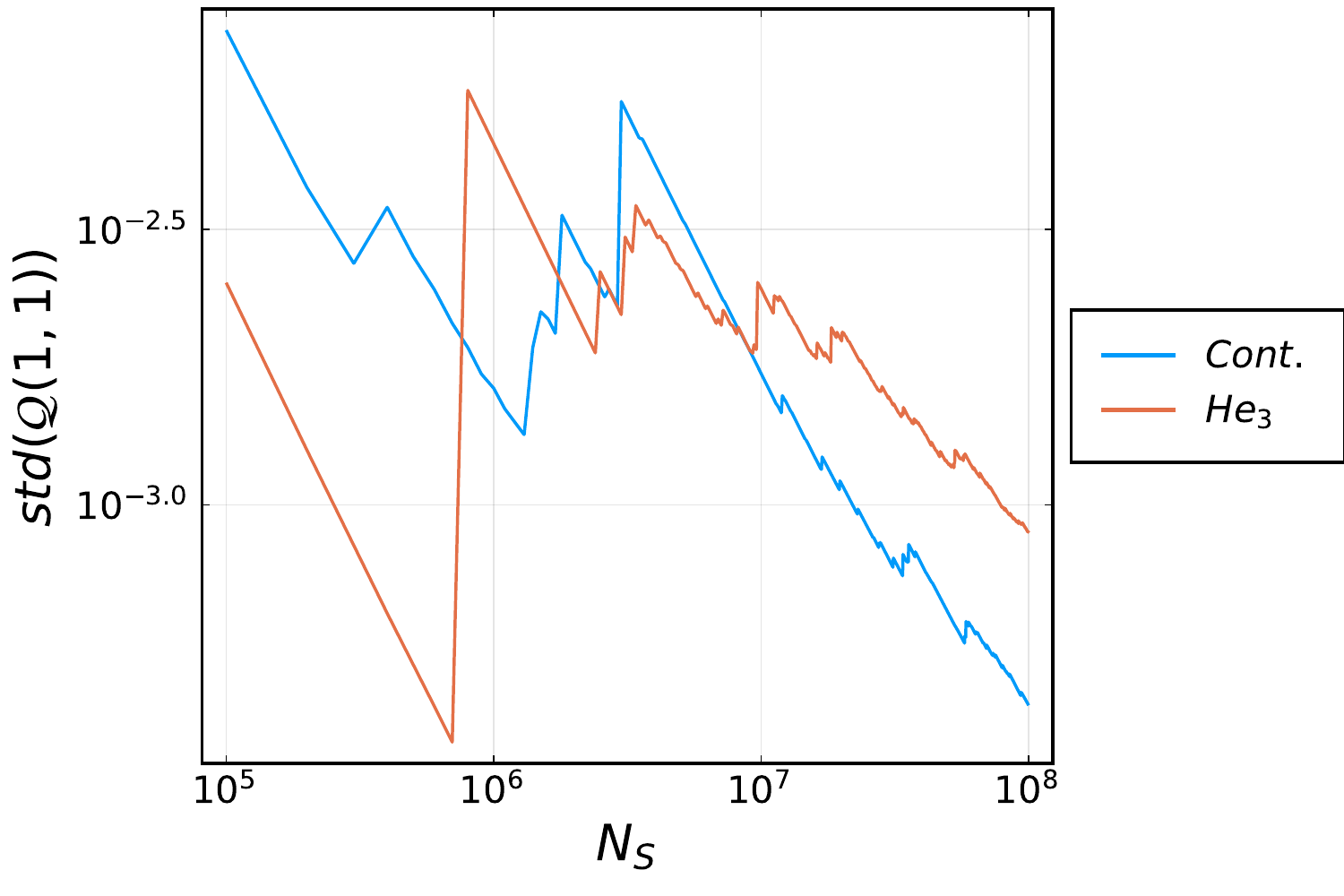}
\caption{Standard deviation of $\co = \mathcal{Q}(1,1)$ vs.\ the sample size for the $8\times 8$ lattice, with $m=1.73$, $g = 1.0$ and $N_f =2$ using the $\he_3$ discrete sampling scheme for the Gross-Neveu model. }
\label{fig:gn_8x8_std}
\end{figure} 

Fig.~\ref{fig:gn_8x8_std} displays results for the same quantities calculated using a larger lattice of extent $L=T=8$. It is clear that over the same range of sample sizes, even the discrete sampling scheme does not conclusively show the variance decreasing as $1/N_S$.

\begin{figure}[!t]
\begin{subfigure}{1.0\linewidth}
\includegraphics[width=0.9\columnwidth]{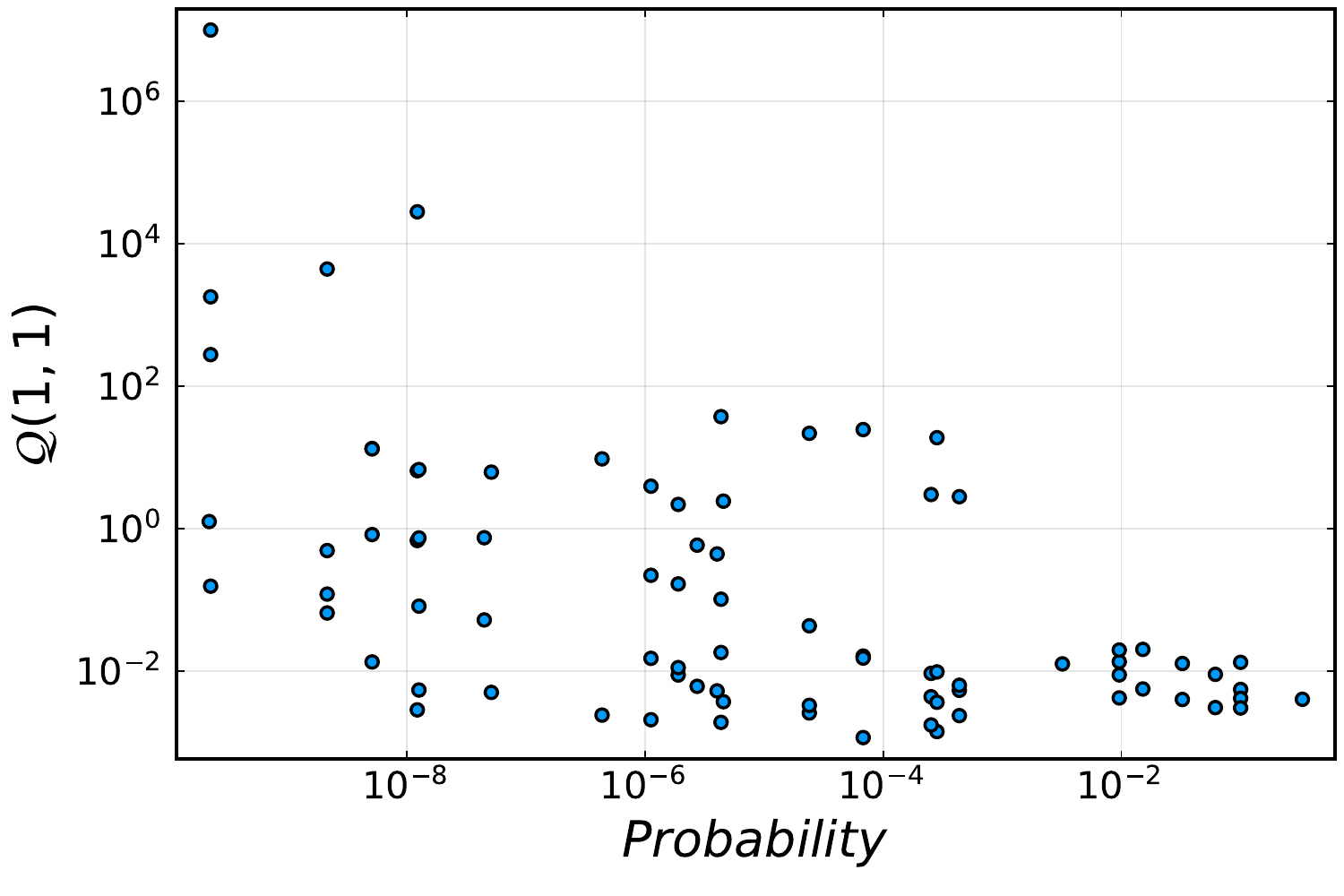}
\newsubcap{(a)}
\end{subfigure}
\begin{subfigure}{1.0\linewidth}
\includegraphics[width=0.9\columnwidth]{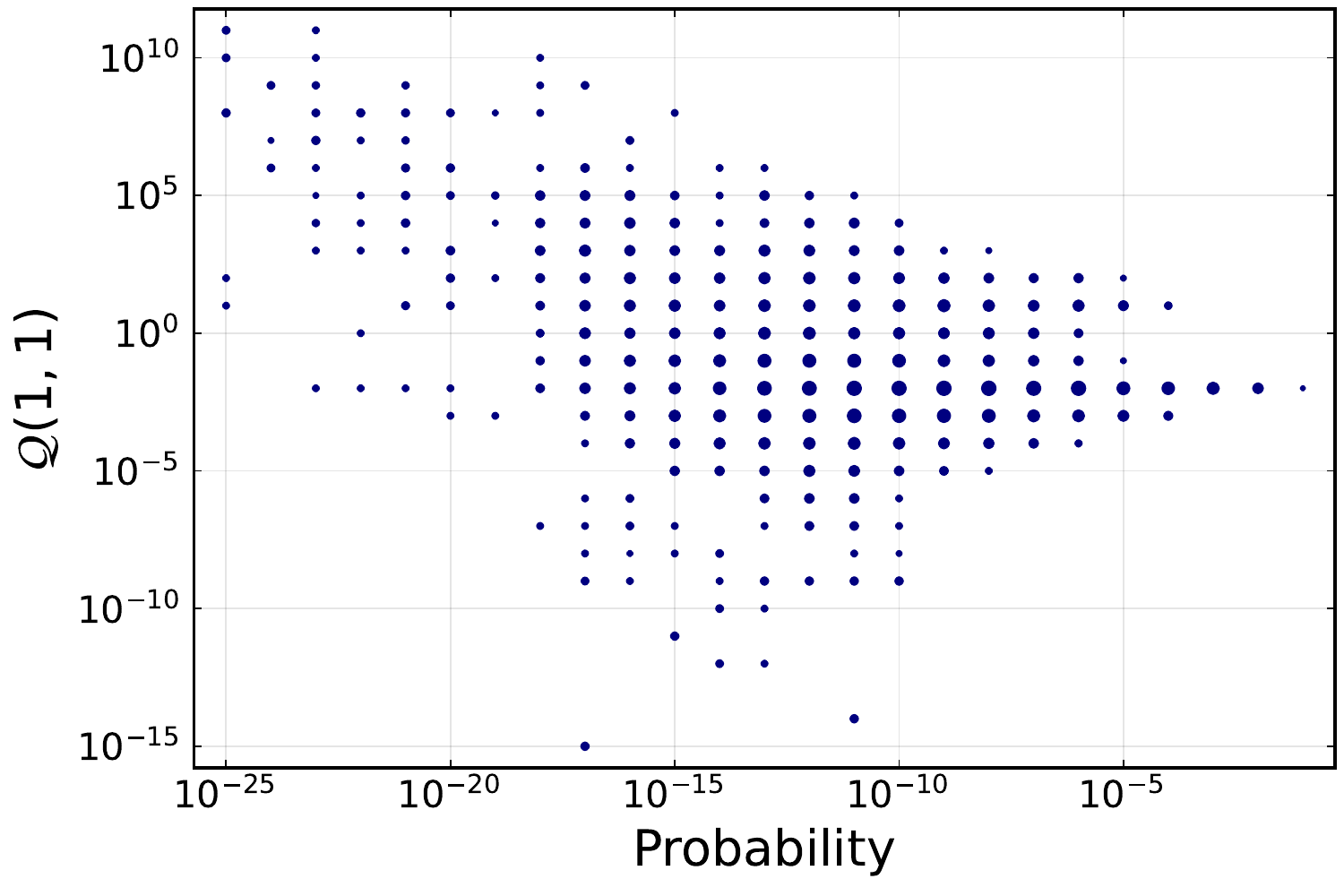}
\newsubcap{(b)}
\end{subfigure}
\begin{subfigure}{1.0\linewidth}
\includegraphics[width=0.9\columnwidth]{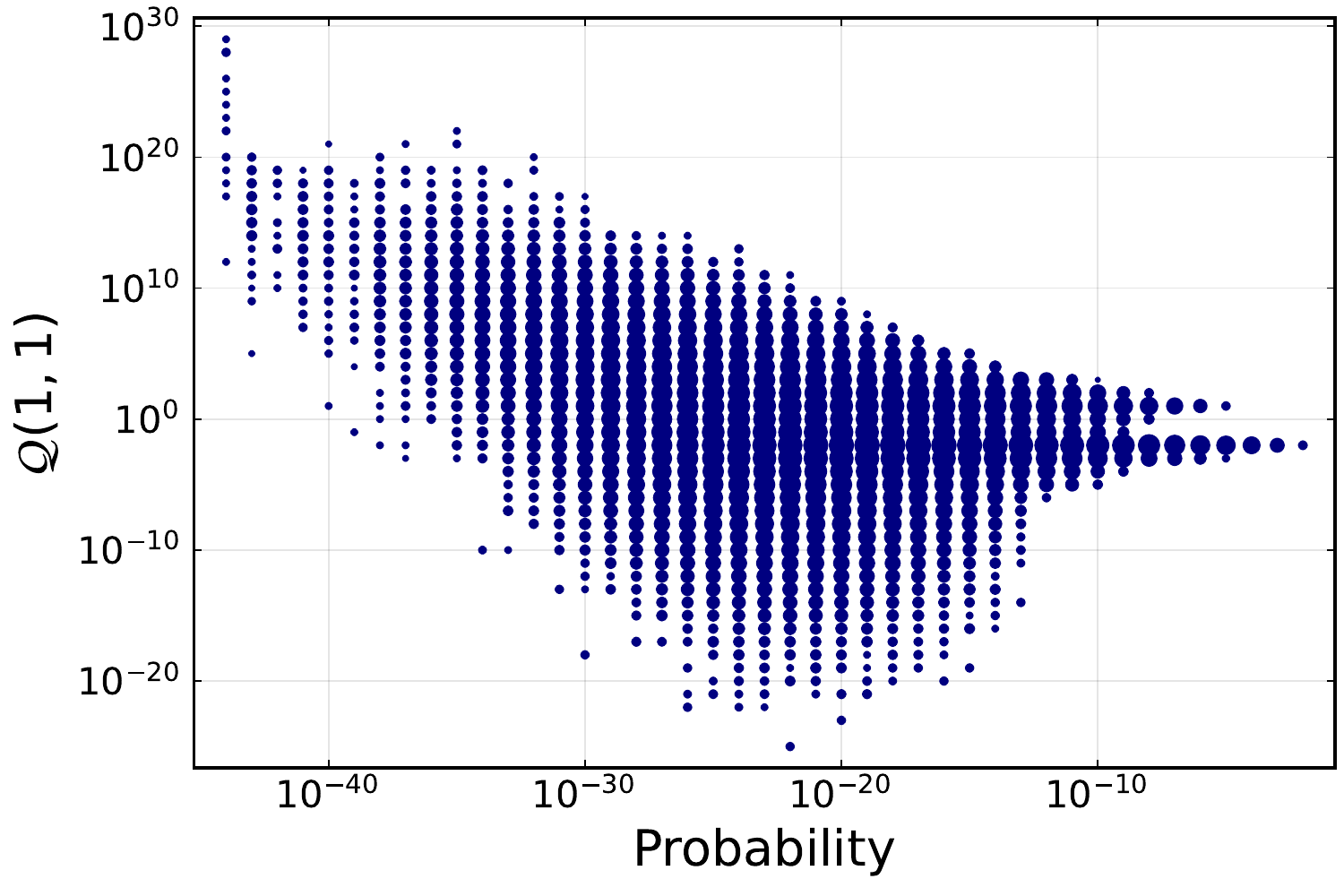}
\newsubcap{(c)}
\label{fig:log_count_local_4x4}
\end{subfigure}
\caption{The  values of $\cq(1,1)$ vs.\ the probabilities of the configurations for  (a) $2\times 2$ (b)  $3\times 3$  and (c) $4\times 4$ lattice geometries with $N_f=2$, $m=-1.5$, $\ss g =2.0$ and the $\he_3$ discrete sampling scheme for the GN model. For (b) and (c), the data are binned in units of one decade on both axes and the radius of the plot symbol indicates the number of samples in a given bin.
\label{fig:6}}
\end{figure}
The lack of convergence seen for the larger lattice 
can be understood by considering the spectrum of the logarithm of $\cq(1,1)$.
Fig.~\ref{fig:6} shows this spectrum on each configuration as a function of the logarithm of the probabilities of the configurations for  $L\times L$ lattices with $L\in\{2,3,4\}$ (since the number of configurations grows exponentially with $L$, results for $L>4$ are not shown). As can be seen in each case, there are a significant number of rare but important configurations. As $L$ increases, the number of these configurations increases rapidly.

The operator $\cq(1,1)$ is explicitly constructed such that for the continuous HS sampling scheme $P(\phi^*) = 0$ and  $\hat\cq(\phi^*) = \ii$ for at least one exceptional configuration $\phi^*$ while $\int_{\phi} P(\phi)\hat\cq(\phi) < \ii$ (here, $\hat\cq$ is the HS representation of $\cq(1,1)$ after fermions are integrated out). Since 
\be
P(\phi)\propto e^{-\frac{1}{2}\sum_x\phi^2(x)} \det[D(\phi)],
\ee
it follows that for $|\phi|<\ii$, $P(\phi^*) = 0$ occurs only when the determinant vanishes. 

For a valid discrete sampling scheme, $\phi^*$ will not be in the domain of the discrete variable $\xi$. However for $\xi$ close to $\phi^*$, 
\be
P(\xi)\propto w(\xi)\det [ D(\xi)] 
\ee
will be small since $w(\xi)>0$ and the determinant has the same functional dependence on either the continuous or discrete HS field. Similarly, $\hat\cq(\xi)$ will be large for $\xi$ near $\phi^*$ as both the continuous and discrete HS transforms result in the same functional form for $\hat\cq$ after the fermion fields are integrated out.
As a consequence of this behaviour,  configurations of smaller and smaller probabilities contribute larger and larger amounts to $\cq(1,1)$. 
From Fig.~\ref{fig:6},  it is clear that
this issue is exacerbated for larger lattices, Since the set of exceptional configurations grows with volume, the number of nearby configurations in discrete HS sampling with small probability and large contribution to $\cq(1,1)$ grows rapidly.
While the discrete sampling scheme $\cq(1,1)$ has finite variance, in practice one needs to have a sample size on the order of the inverse of the smallest probability to obtain a reliable estimate of $\ev{\cq(1,1)}$.
The smallest probability for a lattice with volume $V$ and number of degrees of freedom per site $c$ has an upper bound $\sim \co(c^{-V})$ although the smallest probabilities  will typically be much smaller.
Consequently for observables that have formally infinite variance, one needs to have a sample size that is greater than $\co(c^V)$ to properly estimate the mean.

As a comparison, Fig.~ \ref{fig:log-count-condensate} shows  the logarithm of the absolute value of an observable with finite variance, namely
\be
\overline\co = \ff{1}{L^2} \sum_{s,t,i,\sigma} \Psibar_i^\sigma(s,t) \Psi_i^\sigma(s,t)
\ee
for $L=4$, $N_f=2$, $m=-1.5$, $\ss g =2.0$ using the $\he_3$ sampling scheme.
In terms of the auxiliary variable, $\xi$, after integrating out the fermions, this operator will take a form $\xi \to \overline\co_{a.v.}(\xi)$, so that $\ev{\overline\co} = \sum_\xi P(\xi) \overline\co_{a.v.}(\xi)$. The notation $a.v.$ indicates the ``absolute value of the condensate'' which refers to the random variable $\xi \to \abs{\overline\co_{a.v.}(\xi)}$. Note that this definition depends on the particular auxiliary variable chosen.  In contrast to $\cq(1,1)$, $\overline\co$ only involves one fermion bilinear in each term in the sum and is thus less singular around exceptional configurations; although  for small probabilities  $\log \overline\co\sim -\log( \text{prob})$, this growth is not as severe as in the case of $\cq(1,1)$. This is made clear in Fig.~\ref{fig:8} which compares the behaviour of  $\overline\co$ and $\cq(1,1)$ directly. While the behaviour of observables with infinite variance in regions of low probability is controlled by the exceptional configurations, the more general structure of the log-count plots above is specific to the particular observable.
\begin{figure}[!t]
\includegraphics[width=0.9\columnwidth]{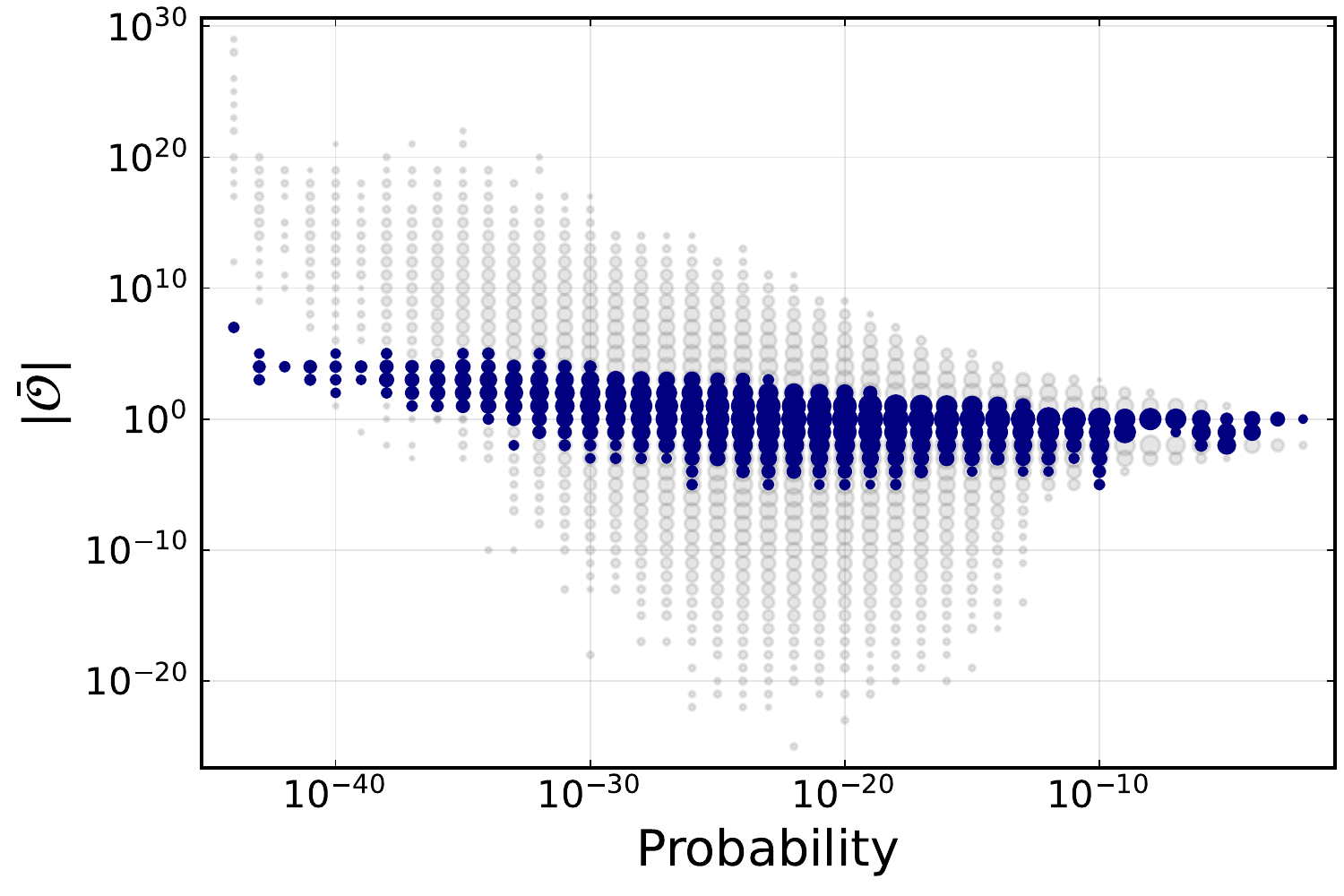}
\caption{The absolute values of the values of condensate vs. the probabilities of the configurations for the $L=4$ lattice with $N_f=2$, $m=-1.5$, $\ss g =2.0$ and the $\he_3$ sampling scheme for the GN model.  Binning is performed by partitioning both axes in intervals of length $1$. The radius of the plot symbol corresponding to a given bin is equal to $1 + \log_{10}($number of samples in the bin$)$. The grey markers are the same as those in Fig.~\ref{fig:6}(c) and are shown here for comparison. 
\label{fig:log-count-condensate}}
\end{figure} 
\begin{figure}[!t]
\includegraphics[width=\columnwidth]{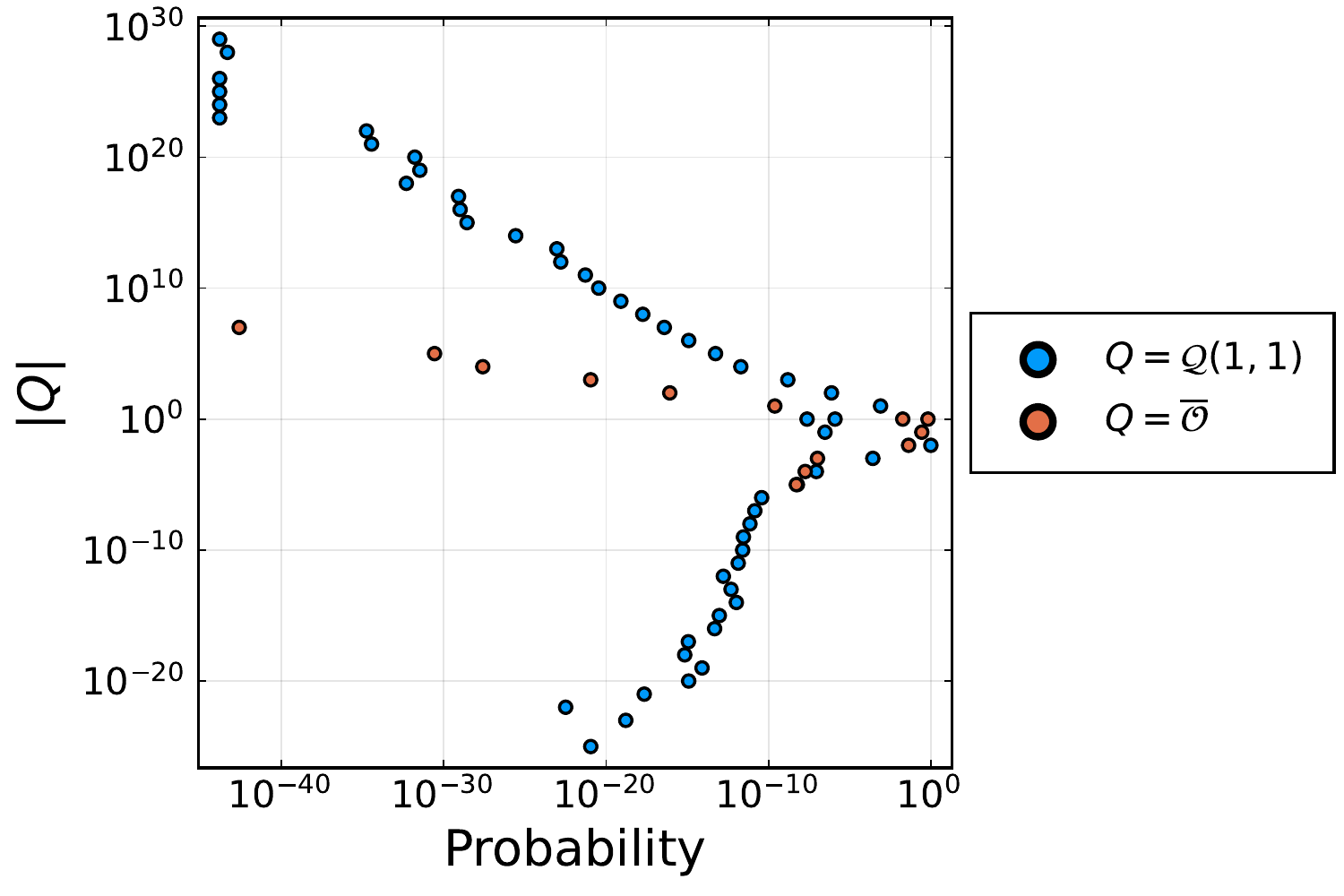}
\caption{Comparison of the spectra of $\cq(1,1)$ and $\overline{\co}$. For each observable, the probabilities of the configurations that correspond to a given observable range (as in Figures \ref{fig:log_count_local_4x4} and \ref{fig:log-count-condensate} respectively) are summed over and provides the probability of finding an observable in the given range.
\label{fig:8}}
\end{figure} 

\subsection{Summary}
\label{sec:discrete-summary}
The discrete sampling schemes that have been proposed have manifestly finite variance provided the roots in the scheme do not contain an exceptional configuration. In the case where the roots do contain an exceptional configuration, the variance is still finite; however, the sample mean will be biased due to the missing  contribution of the exceptional configuration to the mean. These discrete sampling schemes are effective for calculating observables for small lattice volumes and provide interesting testing grounds for investigation of the  fundamental issues of infinite variance. However for quantities with infinite, or very large, variance under a continuous HS sampling, the discrete sampling schemes do not practically overcome the issues of large variance for large volumes.

\section{Reweighting}
\label{sec:reweighting}
In this section, a method for sampling non-negative observables with infinite variance is proposed that constructs the target observable through a series of discrete reweighting steps or through a continuous reweighting procedure. In each case, samplings are performed using probability measures that incorporate part of the observable.

\subsection{Discrete reweighting}

Consider an unnormalised probability distribution $P(x)$ and an observable $\ct(x)$ that is non-negative everywhere. The expectation value
\begin{equation}
  \lc \ct \rc = \ff{\sum_x P(x)\co(x)}{\sum_x P(x)},
\end{equation}
with the standard estimator for this quantity is given by
\begin{equation}
  \hh \ct = \ff{1}{N_s}\sum_{i=1}^{N_s} \ct(x_i),
\end{equation}
where the $x_i$ are sampled with respect to the probability weight $P(x)$. 

As in the previous sections, the  variance of the standard estimator is not well-defined if the second moment of $\ct$ under the unnormalised probability weight $P(x)$ is infinite. 
To surmount this problem,  a set of unnormalised probability weights $P_s(x)$
\begin{equation}
  P_s(x) = P(x)\ct(x)^s,
\end{equation}
are introduced with $P_0(x) = P(x)$. Since $\ct(x)$ is non-negative, this forms a probability distribution for real $s$. We denote the expectation value of an observable with respect to $P_s(x)$  by $\lc \, \cd\,  \rc_s$. It is straightforward to see that:
\begin{equation}
  \lc \ct \rc = \prod_{r=0}^{N-1}\lc \ct(x)^{1/N} \rc_{\ff r N},
  \label{eq:Orw}
\end{equation}
where $N$ is a  positive integer. 

Based on the breakup in Eq.~\eqref{eq:Orw}, an alternative estimator of $\lc \ct \rc$ can be defined as follows.
Consider a set of $N$ configurations such that $x_r$ is sampled with respect to $P_{\ff{r}{N}}$ and denote the set by $\mathbf{x} \equiv \lp x_0,\cdots,x_{N-1} \rp$. Let $\mathbf{x}^{(k)}\equiv \lp x^{(k)}_0,\cdots,x^{(k)}_{N-1} \rp $ for $k\in\{1,\cds, N_s\}$ be i.i.d. sets of configurations. In terms of these sets, a valid estimator is given by
\bad 
\ti \ct\lk \mathbf{x}^{(1)},\cds,\mathbf{x}^{(N_s)} \rk= N_s \sum_{k=1}^{N_s} \prod_{r=0}^{N-1}  \ct^{\ff 1 N} \lp x^{(k)}_r \rp,
\label{eq:reweighting-estimator}
\ead
where the total number of samples is $N_S={N}\times{N_s}$.
It is easy to check that this estimator is unbiased. Except in pathological cases, the random variables $\ct(x_r)^{\ff 1 N}$, where $x_r$ are sampled with respect to $P_{\ff r N}(x)$, will have finite variance for $r\in\{0,\cds,N-1\}$  for large enough $N$ (each quantity is less singular near an exceptional configuration than the original observable). If this is the case, then the estimator $\ti \ct$ will also have finite variance.

Fig.~\ref{fig:gn-2x2} presents results for $\cq(1,1)$  defined in Eq \eqref{eq:gn-op-def} for the Gross-Neveu model on a $L=2$ lattice using this discrete reweighting sampling scheme, Eq.~\eqref{eq:reweighting-estimator}. For comparison with Section \ref{sec:GN-discrete}, Fig.~\ref{fig:gn-2x2} also presents the corresponding results for the $L=8$ lattice. Note that the exact result for the latter case is not shown as calculating it with the $\he_n$ discrete sampling scheme requires the generation of $n^{64}$ configurations. 
\begin{figure*}[!t]
\includegraphics[width=0.95\columnwidth]{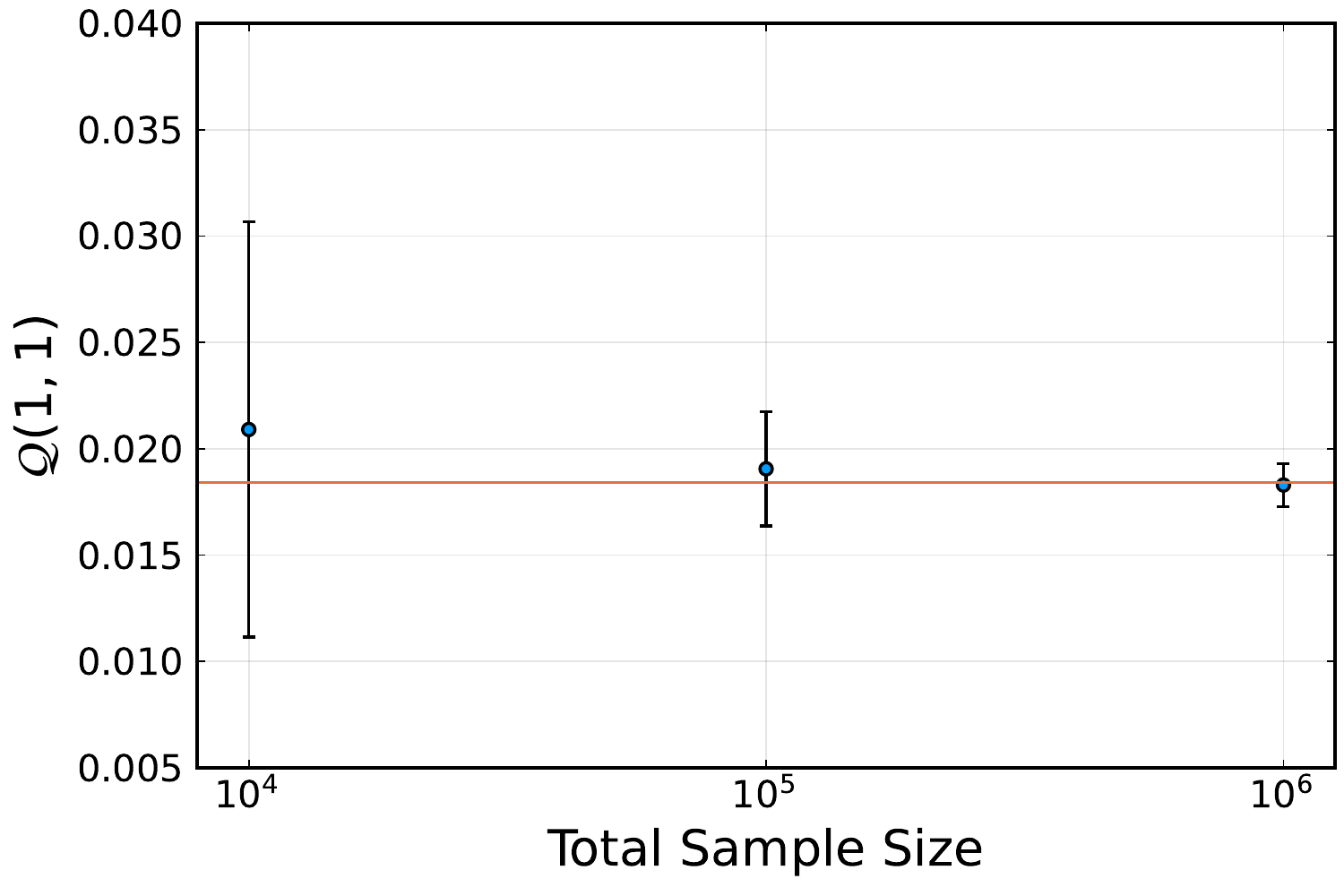}
\qquad \qquad
\includegraphics[width=0.94\columnwidth]{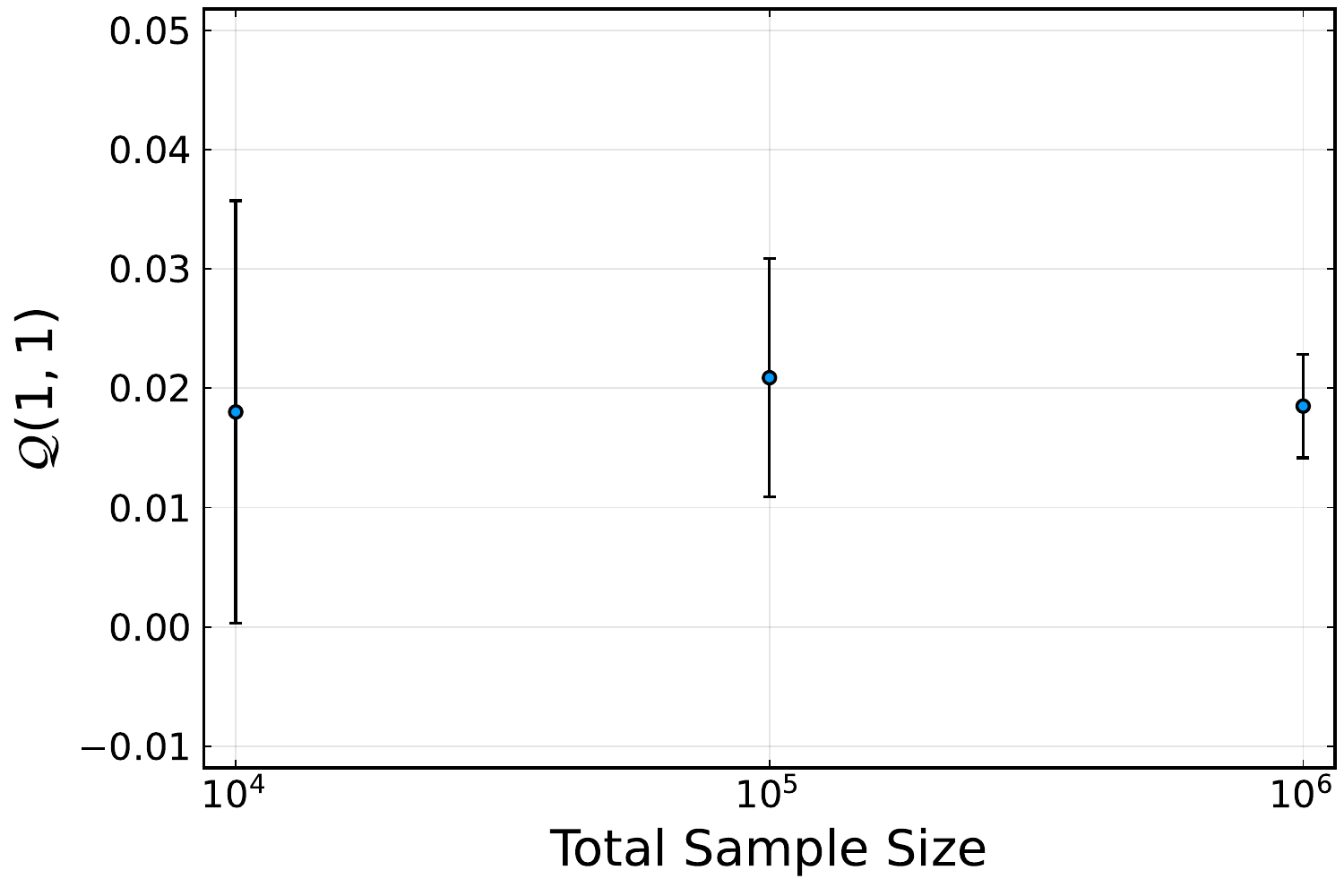}\\
(a)\hspace{8cm}(c)\\[5ex]
\includegraphics[width=0.95\columnwidth]{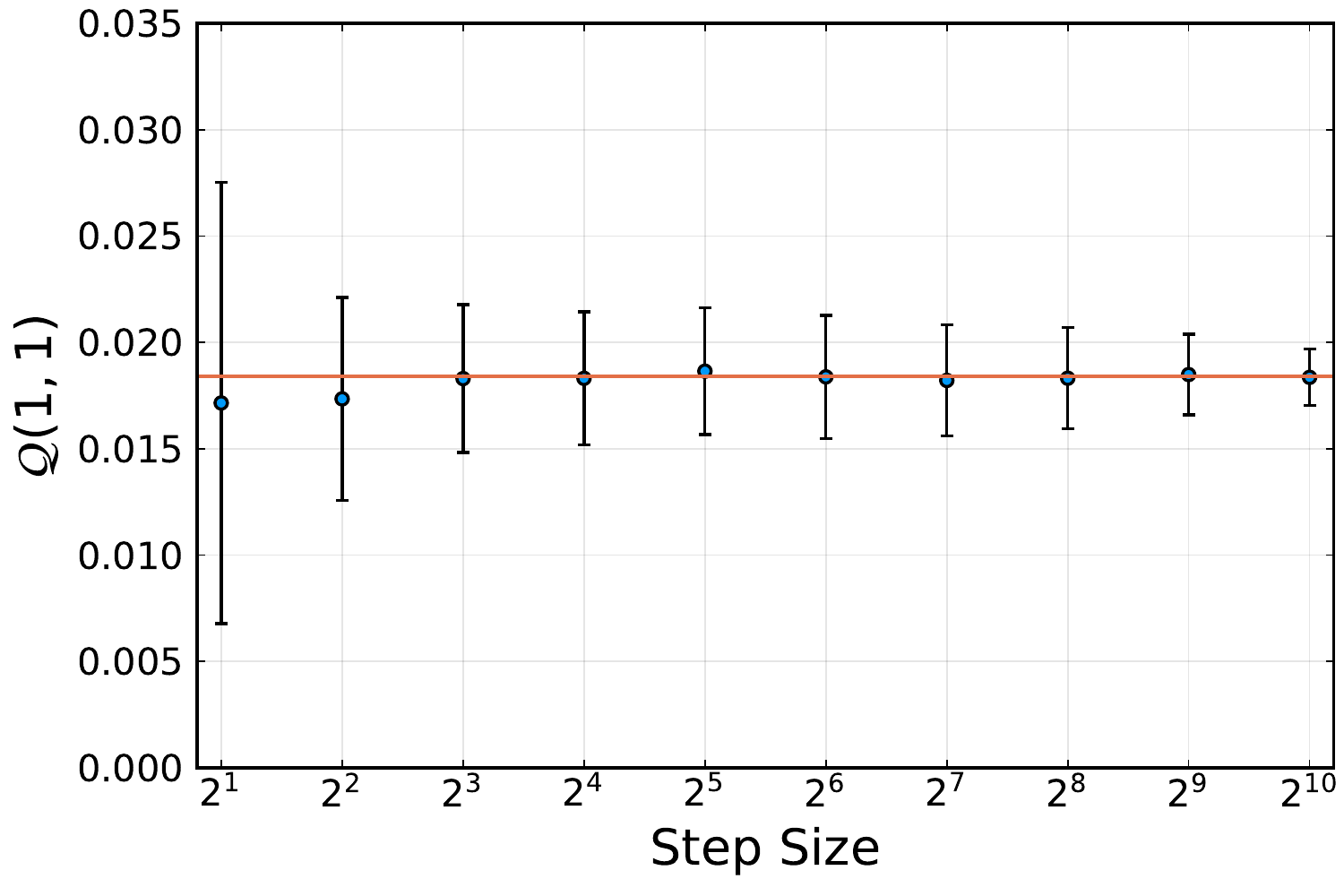}
\qquad\qquad
\includegraphics[width=0.94\columnwidth]{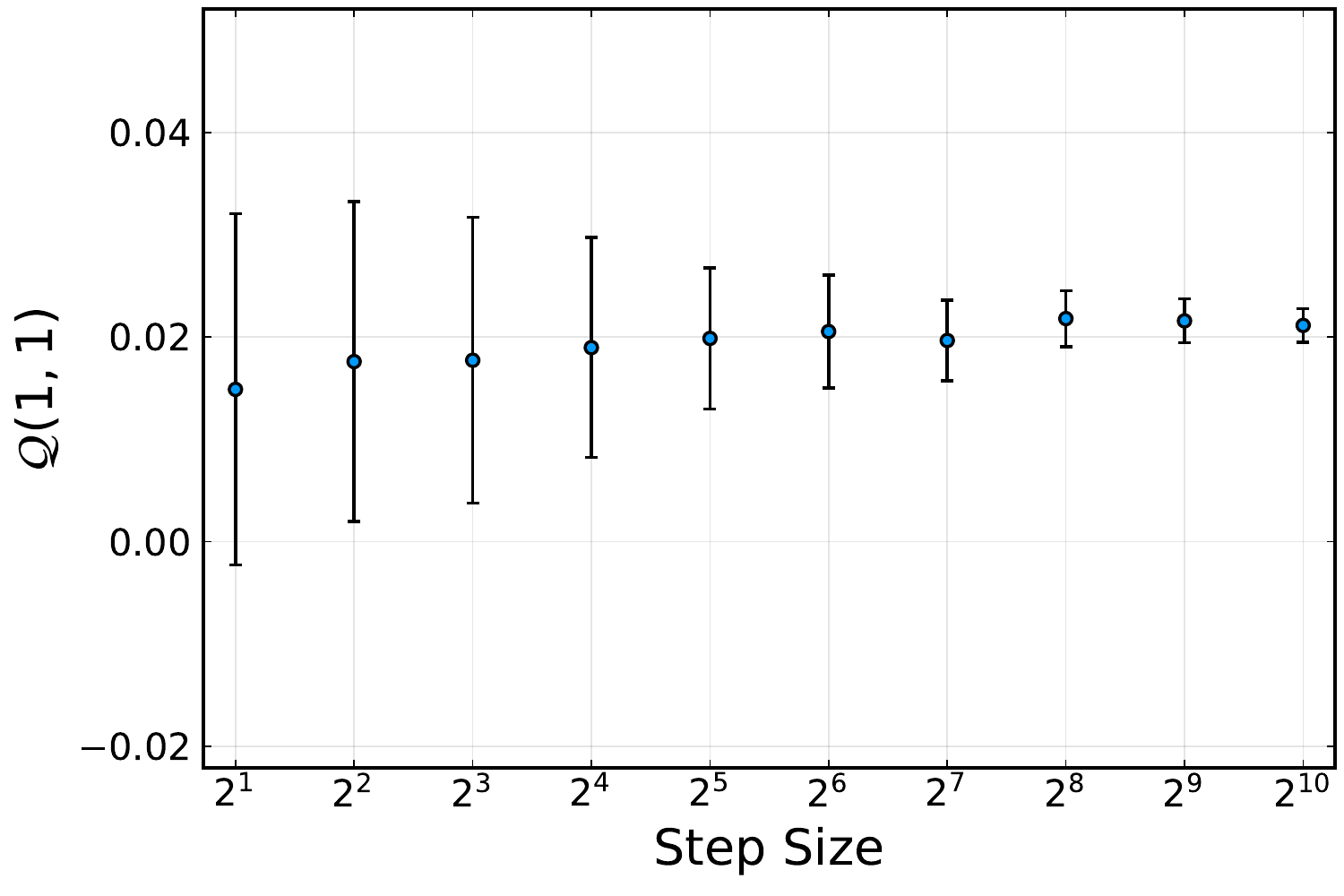}
(b)\hspace{8cm}(d)

\caption{Estimations of the mean of $\cq(1,1)$ obtained with the median of means estimator applied to Eq.~\eqref{eq:reweighting-estimator} for various step numbers and  total sample sizes for the Gross-Neveu model for the $L=2$ (left) and $L=8$ (right) lattice extent, and for $m=-1.5$, and $\ss g =2.0$. In the top row, (a) and (c), the step size $N=10$ is fixed while in the bottom row, (b) and (d), the total sample size $N_T=1048576=2^{20}$ is fixed and step sizes are chosen to be $2^k$ for $k \in\{1,\cds,10\}$. For $L=2$, the red line shows the exact value obtained from explicit summation over all possible configurations of the discrete sampling scheme. The error bars at each sample size show a  confidence level of $0.9973$. 
}
\label{fig:gn-2x2}
\end{figure*}

\subsection{Continuous Reweighting}

There is a natural extension of this sequential reweighting method  to a continuous version of the  procedure. To arrive at this version, note that
\begin{equation}
\lc \cp \rc = \lc \cp^{1-s} \rc_s \lc \cp^s \rc_0 
\end{equation}
for any $s \in [0,1]$ and positive observable $\cp$ as long as $Z_s \equiv \lc \cp^s \rc_0$ is finite. $Z_s$ is naturally interpreted as the partition function for the probability weight $P_s(x) = P(x)\cp(x)^s$. Since the left-hand side of this equation is $s$-independent, one obtains 
\begin{equation}
0 = \ff{d}{ds}\lp \lc \cp^{1-s}\rc_s Z_s\rp.
\end{equation}
This differential equation is straightforward to solve, noting that $\inv{Z_s}Z'_s = \lc \log \cp \rc_s$ and $\lc\log(\cp^0)\rc_s=0$. Therefore, under the assumption that $Z_s$ and $\lc \log \cp\rc_s$ are both finite for $s \in [0,1]$, one finds that
\begin{equation}
\lc \cp \rc ={\rm exp}\lp\int_{0}^1 ds \lc \log \cp \rc_s \rp.
\label{eq:log-eq}
\end{equation}

Utilizing Gauss-Legendre quadrature and  Eq.~\eqref{eq:log-eq}, an estimator for $\log \lc \cp \rc$ can be defined. Let $N$ a positive integer. Then an integral of the form $\int_0^1 ds\, f(s)$ can be approximated by $\sum_{i=1}^N c_i f(s_i)$, where $c_i = \ff 1 2 w_i$ and $s_i=\ff{1+z_i}{2}$ are determined by the roots, $z_i$, of the $N$th Legendre polynomial and the corresponding weights, $w_i$, associated with Gauss-Legendre quadrature.\footnote{If $f(x)$ is a polynomial of degree at most $2N-1$, then $\int_{-1}^1 f(x)dx = \sum_{i=N}w_i f(x_i)$ where $\{x_i\}$ are the roots of the $N$th Legendre polynomial $P_N(x)$ and $w_i = \ff{2}{(1-x_i)^2 \lp P'_N(x_i)\rp^2}$.} Defining the set of configurations $\mathbf{x} = \lp x_1,\cds, x_N\rp$  where $x_i$ is sampled with respect to $P_{s_i}(x)$, let $\mathbf x^{(k)}$ for $ k \in\{1,\cds,N_s\}$ be i.i.d. sets of configurations. Then the following is an estimator for $\log \lc \cp \rc$:
\bad 
\widetilde{\log \lc \cp \rc} \lk \mathbf{x}^{(1)},\cds,\mathbf{x}^{(N_s)} \rk = N_s\sum_{k=1}^{N_s}\sum_{i=1}^{N} \log \cp \lp x_{i}^{(k)}\rp,
\label{eq:log-estimator}
\ead 
where $N_S=N\times N_s$ is the total number of samples divided evenly between each of the $N$ Gauss-Legendre quadrature points.

As a very simple demonstration of this method, consider a one-dimensional example of the standard normal  distribution $P(x)=\ff{1}{\ss{2\pi}}e^{-x^2/2}$ and observables $\cp_p(x) = e^{px^2}$ for real $p < \ff 1 2$. The expectation values of these observables are given by $\lc\cp_p\rc=\ff{1}{\ss{1-2p}}$ and the variances by ${\rm var}(\cp_p) = \ff{1-2p-\ss{1-4p}}{(1-2p)\ss{1-4p}}$. The variance diverges for $p \geq \ff 1 4$ and the expectation value diverges for $p\geq\ff 1 2$
\begin{figure}[!ht]
	\includegraphics[width=\columnwidth]{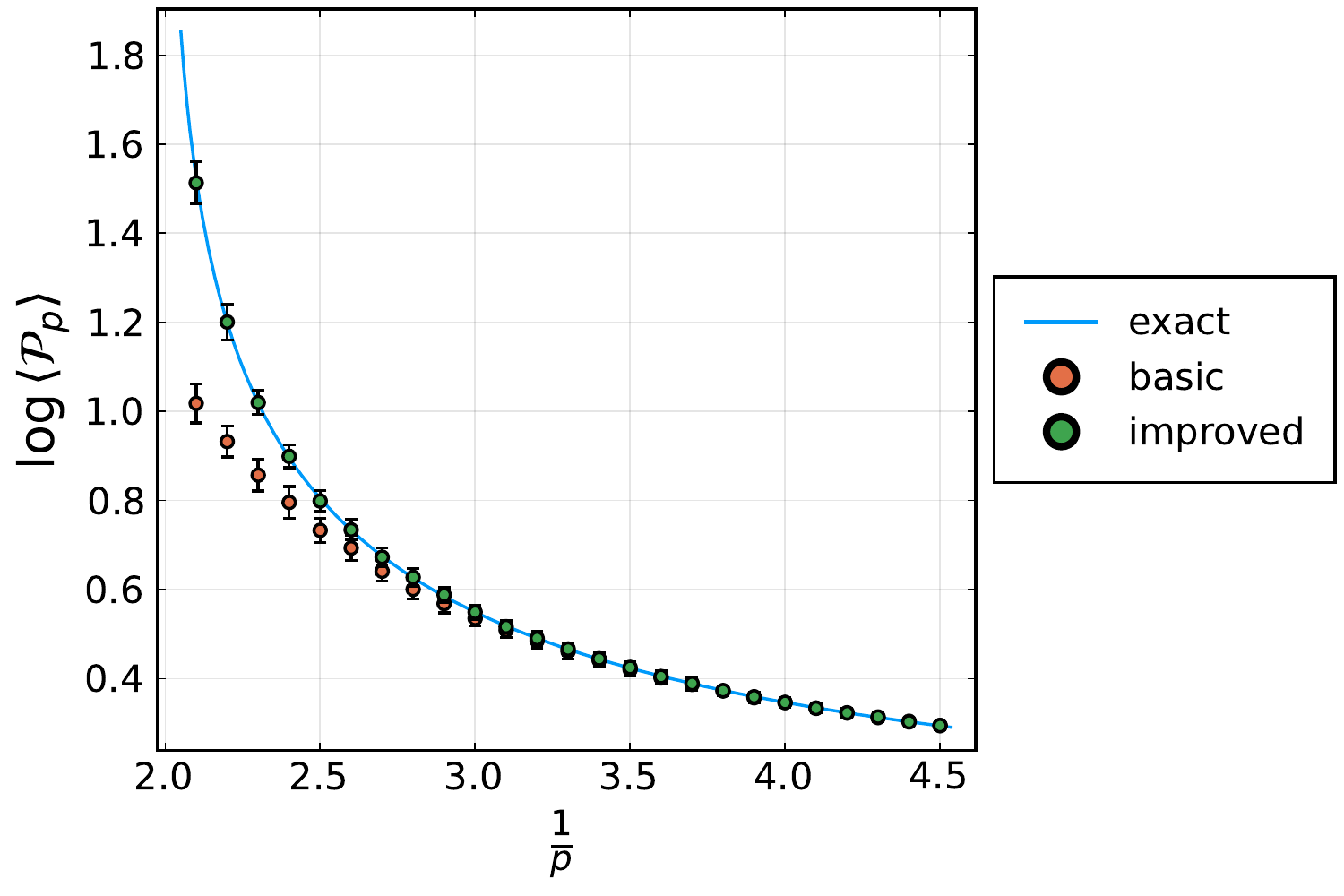}
	\caption{
	Estimates of $\log \lp \cp_p \rp$ obtained with the median of the means estimator applied to Eq.~\eqref{eq:log-estimator} as a function of $1/p$ for $N=1000$ nodes and $N_s = 1000$ samples per node in the Gauss-Legendre sampling. The blue line shows the exact values. ``Basic'' corresponds to  the results obtained where the random variables for each distribution $P_{s \neq 0}$ are obtained through the Metropolis algorithm with $P_0$ being the proposal distribution. ``Improved'' corresponds to the results where the random variables are drawn from the distribution $P_s$  directly.  Error bars show a statistical confidence level of $cl = 0.9973$. 
	\label{fig:log-mom}}
\end{figure}

To calculate $\log \lc \cp_p \rc$ using Eq.~\eqref{eq:log-estimator} and Gauss-Legendre quadrature, a proposal distribution for each $s_i$ is required. A simple and straightforward choice is to use $P_0(x) = \ff{1}{\ss{2\pi}} e^{-\ff 1 2 x^2}$ for all $s_i$. The data labelled as ``basic'' in Fig.~\ref{fig:log-mom} is generated in this context. For $1/p\agt 3$ this provides an accurate estimate that agrees with the exact value within uncertainties. However for a fixed sample size per Gauss-Legendre node, the estimates deviate from the exact value as $p$ approaches $\ff 1 2$.
A possible cause of this is that for $p = \ff 1 2-\e$, the probability weight is proportional to $e^{-\e x^2}$ for $s = 1$ and important contributions will be due to $\abs{x} \lesssim \ff{1}{\ss \e}$. Therefore, more and more samples will be needed as $\e\to 0$ and the ``basic'' method suffers from an overlap problem. To improve the algorithm in this simple example, one can also directly sample for $P_{s_i}$ which are normal distributions. Results generated in this latter context are labelled as ``improved'' in Fig.~\ref{fig:log-mom}, and are seen to agree perfectly with the exact results for all $p$.
Systematic errors due to the finite number of nodes are negligible compared to the statistical errors in both cases.

To test this method on a more realistic system, estimates of $\widetilde{\log \cq(1,1)}$ for the Gross-Neveu model on the $L=2$ and $L=8$ lattices with $m=-1.5$, $\ss g = 2.0$ and $N_f = 2$ are presented in Fig.~\ref{fig:gn-log-2x2}. For $L=2$, the exact value is reproduced within uncertainties while for $L=8$, results from continuous reweighting agree with those from discrete reweighting.
\begin{figure*}[!th]
\includegraphics[width=0.95\columnwidth]{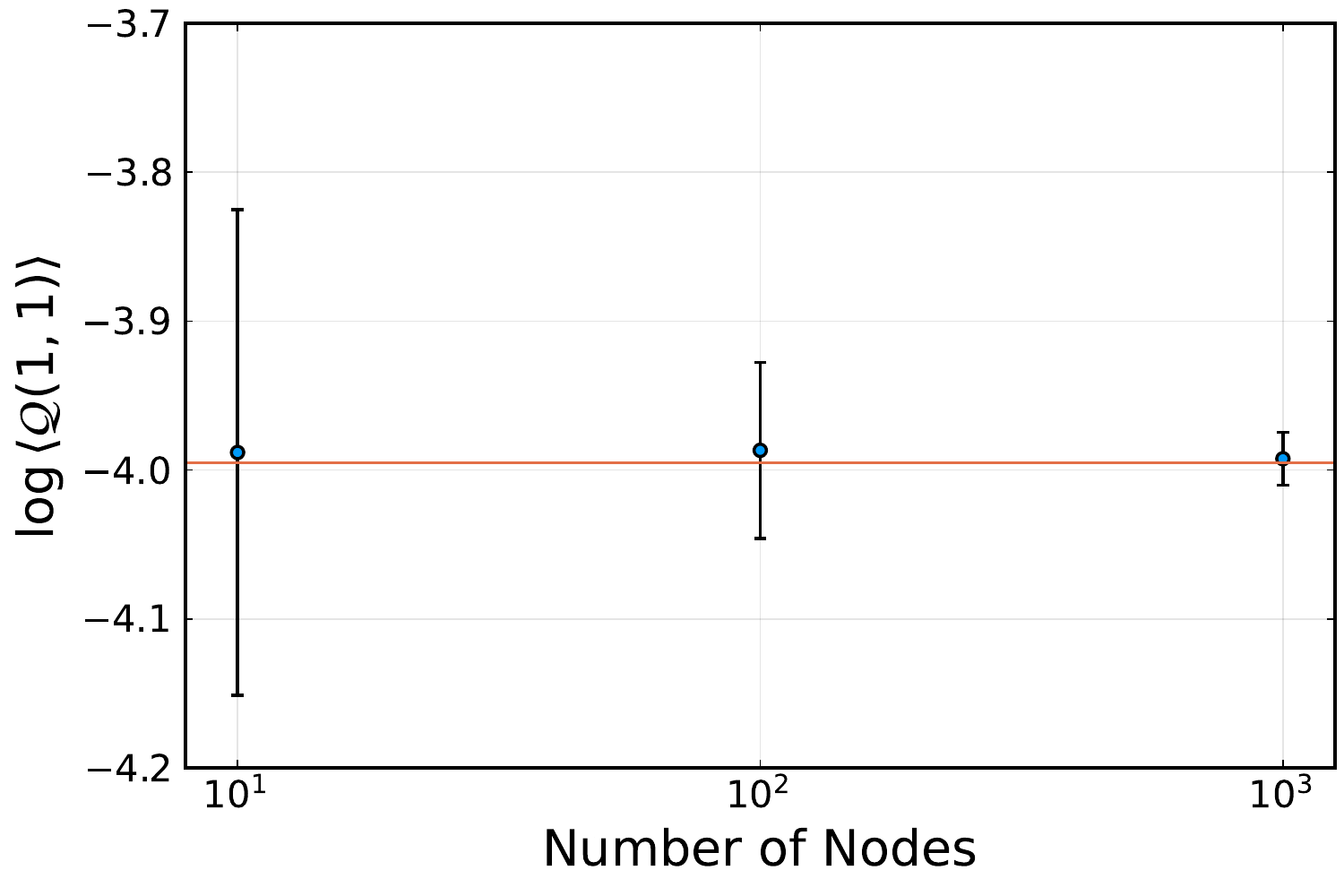} \qquad \qquad
\includegraphics[width=0.94\columnwidth]{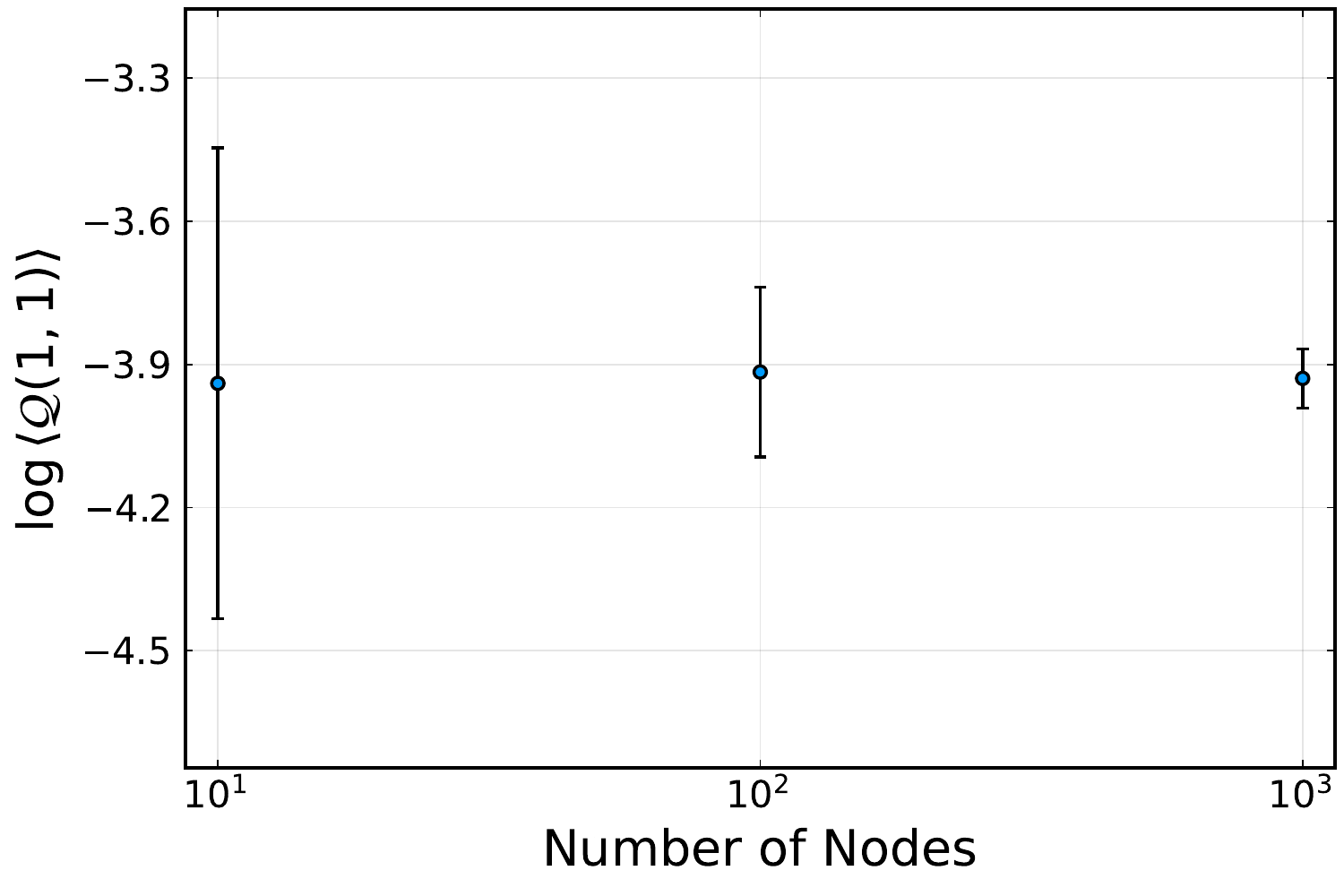} \\
(a) \hspace{8cm}(c) \\[5ex]
\includegraphics[width=0.95\columnwidth]{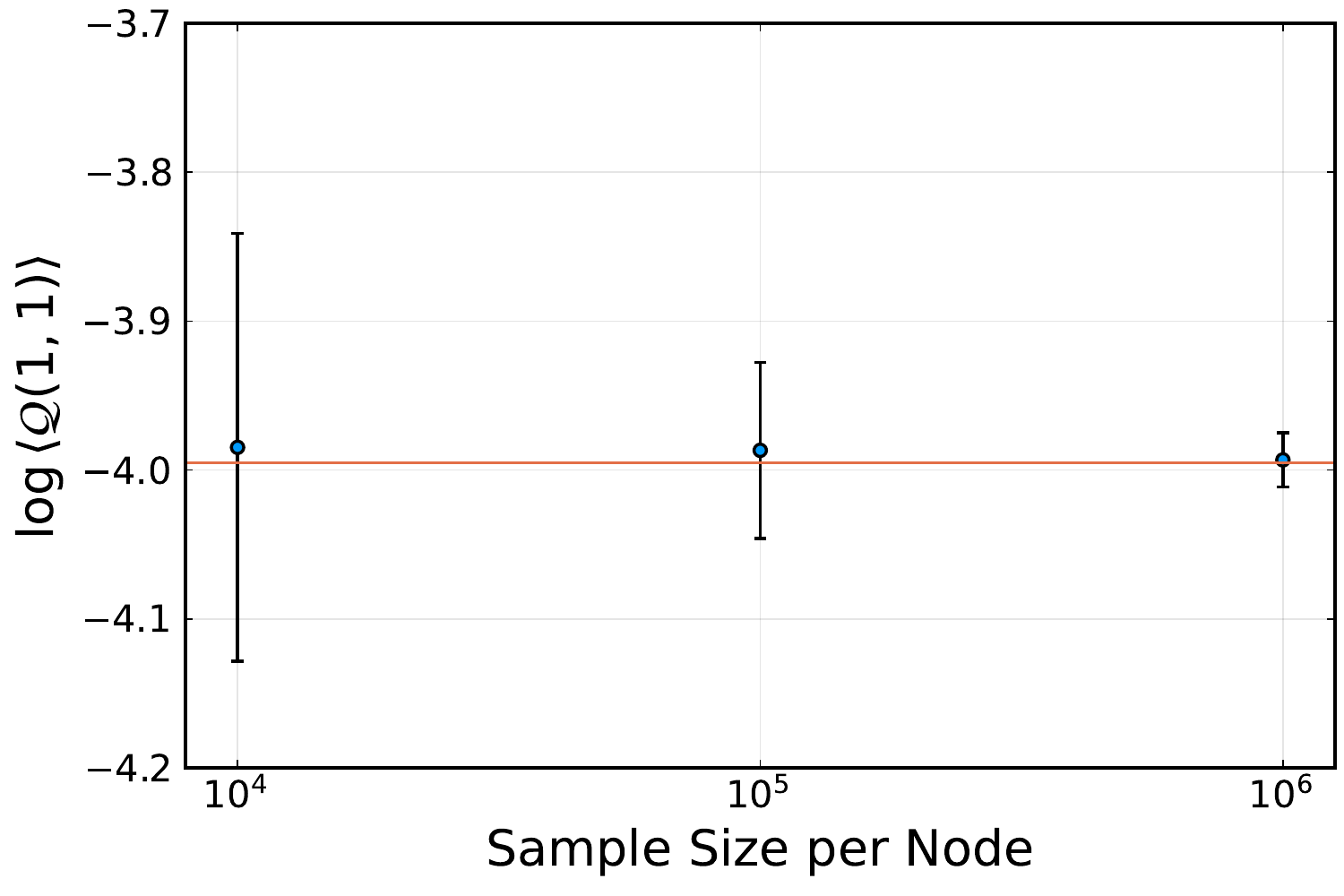}
\qquad \qquad
\includegraphics[width=0.94\columnwidth]{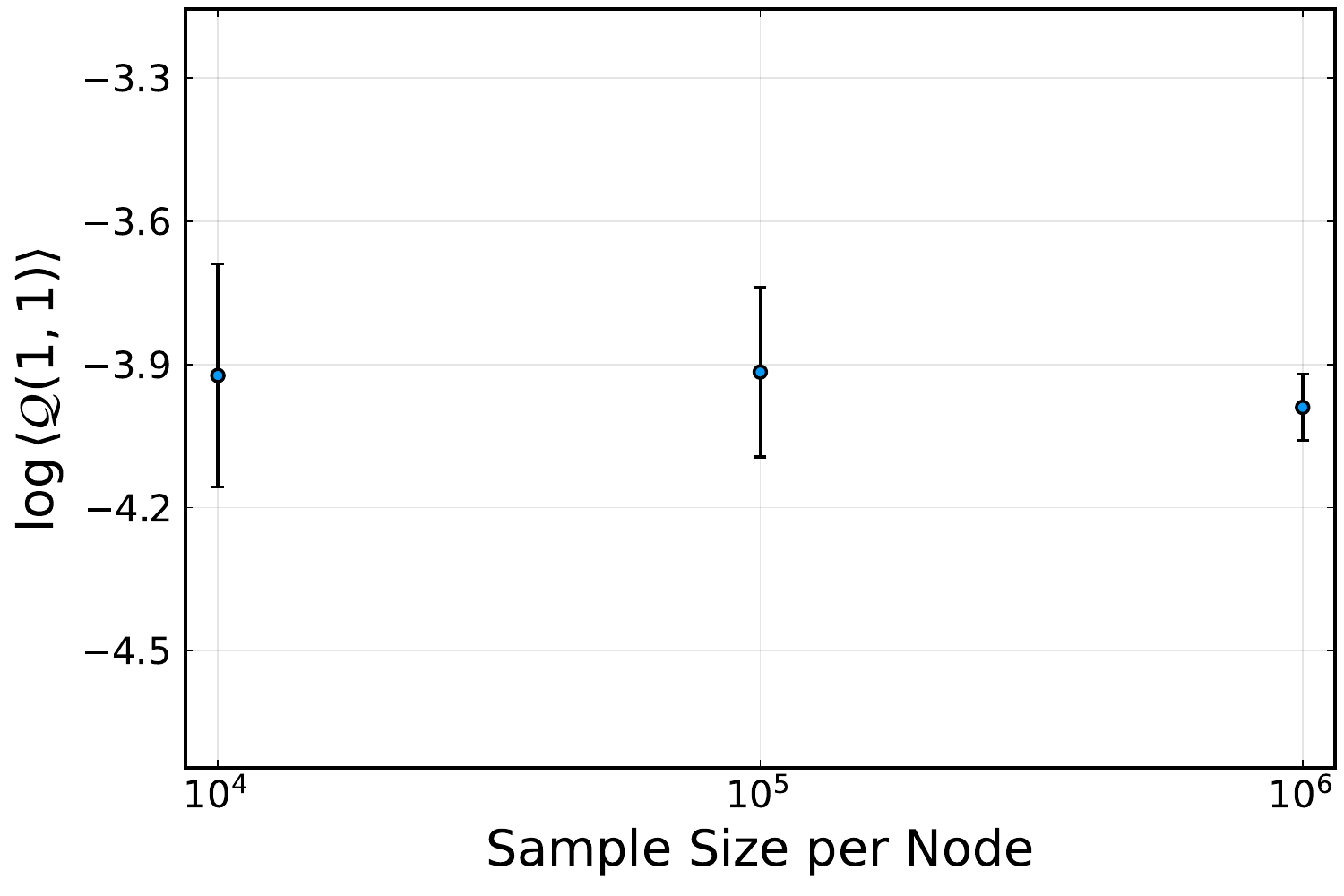} \\
(b) \hspace{8cm}(d)

\caption{Estimates of $\log \lc \cq(1,1) \rc $, for the Gross-Neveu model on  $L=2$ (left column) and $L=8$ (right column) lattice extents with $m=-1.5$, $\sqrt{g} =2.0$ and $N_f=2$, obtained with the median of means estimator applied to Eq.~\eqref{eq:log-estimator} for $N=10^1,10^2,10^3$ nodes with sample size per node $N_{SSPN} = 10^5$ fixed in the top row, (a) and (c), and for $N_{SSPN} = 10^4, 10^5, 10^6$ sample sizes per nodes with the number of nodes $N=100$ fixed in the bottom row, (b) and (d). Error bars show a confidence level of $cl = 0.9973$. In the $L=2$ case, the red line shows the exact value obtained from explicit summation over all possible configurations of the discrete sampling scheme. 
}
\label{fig:gn-log-2x2}
\end{figure*}

\section{Conclusions}
\label{sec:summary}
Large statistical variance in Monte Carlo sampling  severely limits the precision with which many important quantities in quantum field theories can be  determined. In this work, quantities that have formally infinite variance under standard sampling schemes have been considered. 
In the context of fermionic theories, a family of discrete sampling schemes has been presented that surmounts the issue of infinite variance. Nevertheless, the variances in these schemes can be very large (compared to their means squared) and hence sampling maybe inefficient. An alternate sampling scheme has also been developed which can be applied to any non-negative random variable that can be sampled with the Monte Carlo method. While the method has been proposed in order to estimate observables with infinite variances, it is likely to be effective for non-negative random variables that have finite but large noise to signal ratios. There are potentially interesting connections of the investigations presented here to the large time-separation behaviour of two-point correlation functions for quantities that possess global charges, such as for baryons and nuclei in QCD, that will be explored in subsequent work.

\acknowledgements{
We are grateful for insightful discussions with P. Shanahan.
This work is supported by the National Science Foundation under Cooperative Agreement PHY-2019786 (The NSF AI Institute for Artificial Intelligence and Fundamental Interactions, http://iaifi.org/). WD is supported in part by the U.S.~Department of Energy, Office of Science, Office of Nuclear Physics under grant Contract Number DE-SC0011090. WD is also supported by the SciDAC4 award DE-SC0018121. This research was supported in part by the National Science Foundation under Grant No. NSF PHY-1748958.
} 

\appendix
\section{Review of the Basic Probability Theory} 
\label{appendix:probability}
The standard method of estimating an observable (a random variable) is based on the Central Limit Theorem. We therefore begin by reviewing the background for the Central Limit Theorem. We refer to Ref.~\cite{durrett_2019} for further details.

A probability space is a triplet ($\O,\calf,P$) where:
\begin{itemize}
    \item $\O$ is the sample space. 
    \item $\calf$ is the space of events and is required to be a $\s$-algebra.
    \item $P:\calf \to [0,1]$ is the probability measure. 
\end{itemize}
Every element $\o$ in the sample space $\O$ is called an outcome. An event $A \in \calf$ is said to occur if $\o$ is the outcome and $\o \in A$. $\calf$ is required to be a $\s$-algebra which means that
\begin{itemize}
    \item $\emptyset, \O \in \calf$.
    \item $A \in \calf$ implies $A^\mathsf{c} \in \calf$ where $A^\mathsf{c}$ is the complement of $A$.
    \item If $A_{i\geq 1} \in \calf$ is a countable sequence of elements of $\calf$ it follows that $\cup_{i\geq 1} A_i \in \calf$.
\end{itemize}
The probability measure $P$ is required to satisfy:
\begin{itemize}
    \item $P(\emptyset) = 0$.
    \item $P(\O) = 1$. 
    \item If $A_{i\geq 1} \in \calf$ is a countable sequence of pairwise disjoint elements of $\calf$ it follows that $P\lp \cup_{i\geq 1} A_i \rp = \sum_{i\geq 1} P(A_i)$.
\end{itemize}

It must be noted that in general one can't choose $\calf = 2^\O$, the set of all subsets of $\O$. Therefore, the choice of $\calf$ is essential and elements of $\calf$ are said to be measurable.

A random variable $X:\O \to \mathbb{R}$ is a real valued function on the sample space such that for all $a \in \mathbb{R}$, $\inv X \lp A_{\calb_{\mathbb{R}}} \rp \in \calf$ where $\calb_\mathbb{R}$ is the Borel $\s$-algebra on $\mathbb{R}$, the smallest\footnote{Smallest $\s$-algebra containing a given set of sets is defined as the intersection of all $\s$-algebras that contains the given set of sets which can be shown to be a $\s$-algebra.} $\s$-algebra containing all open subsets of $\mathbb{R}$. An equivalent condition is $\inv X\lp (-\ii,a]\rp \in \calf$ for all $a \in \mathbb{R}$. This condition allows one to define another probability distribution $P_X$ on the real line through the formula $P_X \lp (a,b] \rp = P \lp \inv X \lp (-\ii,b] \rp \rp - P \lp \inv X \lp (-\ii,a] \rp \rp$. By the celebrated Carathéodory's extension theorem $P_X$ can be extended to the $\calb_\mathbb{R}$. If $A \in \calb_\mathbb{R}$, $P_X(A)$ should be interpreted as the probability that $X$ takes value in $A$. We further define the cumulative distribution function $F_X(t) = P_X \lp (-\ii,t] \rp$ which gives the probability that $X \leq t$. 

For a stochastic physical system represented by the probability space $\lp \O, \calf, P\rp$, one can consider the space of repeated outcomes denoted by $\O^\ii = \prod_{i=1}^\ii \O$, associated with the $\s$-algebra $\calf^\ii$, the smallest $\s$-algebra containing $\prod_{i=1}^\ii A_i$ where only finitely many of $A_i \in \calf$ are different than $\O$. An element $\o^\ii \in \O^\ii$ is given by $\o^\ii = \{\o_1, \o_2, \cdots \}$ where $\o_i \in \O$ for all $i \in \mathbb{N}^+$. Then, $\O^\ii$ can be identified with the set of of all samples of the physical system with infinite sample size. By Kolmogorov's Extension Theorem, there exists a unique probability measure $P^\ii$ on $\calf^\ii$ such that $P^\ii \lp \prod_{i=1}^\ii A_i \rp = \prod_i P(A_i)$ if only finitely many of $A_i \in \calf$ are different that $\O$. Given a random variable $X$ on $(\O,\calf,P)$, we define the random variable $X_n$ on $(\O^\ii,\calf^\ii,P^\ii)$ by $X_n(\o^\ii) = X(\o_n)$.

For our purposes, there are three important types of convergence for random variables. One says that a sequence of random variables $X_n$ converges to a random variable $X$ 
\begin{itemize}
    \item almost surely/everywhere\footnote{To be precise, there is a set $A \in \calf$ such that for all $\o \in A$, $\lim_{n\to \ii} X_n(\o) = X(\o)$ and $P(A) = 1$. It is possible that the set of all elements $\o \in \O$ satisfying $X_n(\o) \to X(\o)$ is not measurable, but this distinction is not relevant for our discussion. \label{ftnt:almost-surely}} if $P(X_n \to X) = 1$. It is denoted by $X_n \overset{a.s.}{\longrightarrow} X$. 
    
    \item in probability if for every $\e > 0$, $P\lp \abs{X_n-X} > \e \rp \to 0$. It is denoted by $X_n \cip X$.
    
    \item in distribution if $F_{X_n}$ converges pointwise to $F_X$ at every continuity point of $F_X$. It is denoted by $X_n \cid X$.
\end{itemize}
These three different types of convergence imply each other in the sense that
\be 
    \cas \ \Longrightarrow \ \cip \ \Longrightarrow \ \cid .
\ee

Three of the most important results of the probability theory are the Strong Law of the Large Numbers (SLLN), The Weak Law of Large Numbers (WLLN), and the Central Limit Theorem (CLT). 

\ul{\bf Strong Law of Large Numbers}
Let $\{X_n\}$ be a sequence of identically and independently distributed random variables with the finite mean $\mathbb{E}[X_n]=\m$. If one defines the sample mean $\bb X_{n \geq 1} = \frac{1}{n} \sum_{i=1}^n X_i$, it follows that: 
\be 
\bb X_n \cas \m 
\label{thm:SLLN}
\ee 

An important consequence of the SLLN is the Weak Law of Large Numbers

\ul{\bf Weak Law of Large Numbers}
Let $\{X_n \}$ be a sequence of identically and independently distributed random variables with the finite mean $\mathbb{E}[X_n]=\m$. If one defines the sample mean $\bb X_{n} = \frac{1}{n} \sum_{i=1}^n X_i$ for $n\ge 1$, it follows that for any $\e>0$: 
\be 
\lim_{N\to \ii} P\lp\abs{\bb X_N - \m}\leq \e\rp = 1
 \label{thm:WLLN}
 \ee

Although the SLLN  says any sequence of sample means will eventually converge to the mean, in practice it does not say anything about how close a sample mean is to the mean for a given sample size $N$. A similar statement also applies to the WLLN. 
On the other hand, the CLT gives a measure of how close the sample mean is to the mean.

\ul{\bf Central Limit Theorem}
Let $\{X_n\}$ be a sequence of identically and independently distributed random variables with the finite mean $\mathbb{E}[X_1]=\m$ and finite variance $\mathbb{E}[X^2_1]-\mathbb{E}[X_1]^2 = \s^2$. It then follows that $\sqrt{n}\lp \bb X_n - \m \rp \cid \caln(0,\s^2)$ where $\caln(\m,\s^2)$ denotes the normal distribution with  mean $\m$ and  variance $\s^2$.

It follows that for large enough $n$, $\ss{n}\lp \bb X_n - \m \rp \approx \caln(0,\s^2)$. Therefore, it follows that $\bb X_n \approx \caln(\m,\ff{\s^2}{n})$. Although one can derive a similar expression for the estimation of $\s^2$ in the case $\mathbb{E}[X_1^4] < \ii$, it is usually enough to estimate $\s^2$ by the (unbiased) estimator $s_n = \frac{1}{n-1}\sum_{i=1}^n \lp X_i - \bb X_n \rp^2$. 

Let $\{X_n\}$ to be an i.i.d. sequence of random variables with finite variance $\s^2$. Application of the SLLN to $\{X_n\}$ and $\{X^2_n\}$ immediately implies that $s_n \cas \s^2$.

\subsection*{Theorems under infinite variance} \label{sec:main-theorems}

Two theorems are particularly important for analysis of random variables with infinite variance.

\begin{thm} Let $X_{n\geq 1}$ to be a sequence of independent and identically distributed random variables with finite mean $\m$ and infinite variance. Then, for any given $L > 0$, the number of the random variables $s_n$ that satisfies $s_n > L$ is infinite almost surely. \thlabel{thm:infinite-jump}
\end{thm}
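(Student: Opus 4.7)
The plan is to prove the stronger statement that $s_n \overset{a.s.}{\longrightarrow} \infty$, which immediately implies that $\{n : s_n > L\}$ is in fact cofinite (and hence infinite) almost surely for every $L > 0$. The starting observation is the algebraic identity
\begin{equation*}
s_n \;=\; \frac{n}{n-1}\left(\frac{1}{n}\sum_{i=1}^n X_i^2 \;-\; \bar X_n^{\,2}\right),
\end{equation*}
which reduces the problem to controlling the empirical second moment and the empirical mean separately. Since $\mu$ is finite, the Strong Law of Large Numbers applied to $\{X_n\}$ gives $\bar X_n \overset{a.s.}{\longrightarrow} \mu$, hence $\bar X_n^{\,2}\overset{a.s.}{\longrightarrow} \mu^2$, and the prefactor $n/(n-1)\to 1$. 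So the whole task reduces to showing
\begin{equation*}
\frac{1}{n}\sum_{i=1}^n X_i^2 \;\overset{a.s.}{\longrightarrow}\; \infty.
\end{equation*}

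To establish this, I would use a truncation argument exploiting that finite mean combined with infinite variance forces $\mathbb{E}[X_1^2]=\infty$. For each positive integer $M$, set $Y_i^{(M)} = X_i^2\wedge M$. Then $\{Y_i^{(M)}\}$ is i.i.d., bounded, and hence has finite mean $m_M \equiv \mathbb{E}[X_1^2 \wedge M]$. Monotone convergence gives $m_M \uparrow \mathbb{E}[X_1^2] = \infty$ as $M\to\infty$. Applying the SLLN to $\{Y_i^{(M)}\}$ for the fixed integer $M$, there is an event $A_M$ of probability one on which $\frac{1}{n}\sum_{i=1}^n Y_i^{(M)}\to m_M$. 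On the countable intersection $A = \bigcap_{M\geq 1} A_M$, which still has probability one, the pointwise inequality $X_i^2 \geq Y_i^{(M)}$ yields
\begin{equation*}
\liminf_{n\to\infty} \frac{1}{n}\sum_{i=1}^n X_i^2 \;\geq\; m_M \qquad \text{for every } M\in\mathbb{N},
\end{equation*}
so letting $M\to\infty$ along integers gives $\liminf_n \frac{1}{n}\sum X_i^2 = \infty$ on $A$, i.e.\ the empirical second moment diverges almost surely.

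Combining the two ingredients in the decomposition of $s_n$ produces $s_n \overset{a.s.}{\longrightarrow}\infty$. Fix $L>0$; there is then an almost-sure event on which $s_n>L$ for all $n$ past some random threshold, so in particular the set $\{n: s_n>L\}$ is infinite almost surely. The main technical point to be careful about is the order of quantifiers in the truncation step: one cannot simply let $M\to\infty$ inside a single a.s.\ statement, which is why I restrict $M$ to integers and work on the countable intersection of the full-measure events $A_M$. Everything else is a direct application of the SLLN and the basic identity for the sample variance, so no serious obstacle beyond this measure-theoretic bookkeeping is anticipated.
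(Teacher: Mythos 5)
Your proof is correct, and it establishes a strictly stronger conclusion than the theorem asks for: you show $s_n \to \infty$ almost surely (so the set $\{n: s_n > L\}$ is cofinite a.s.), whereas the paper only proves it is infinite. Your route is also genuinely different. The paper works with $T_n = (X_n-\mu)^2$, which is non-negative i.i.d. with infinite mean, proves a tail-sum lemma ($\sum_n P(Z \geq nL) = \infty$ whenever $\mathbb{E}[Z]=\infty$), and invokes the second Borel--Cantelli lemma to conclude that $T_n \geq nL$ for infinitely many $n$; each such single large term already forces the centered sample variance $s'_n = \frac{1}{n-1}\sum_i (X_i-\mu)^2$ above $L$, and the SLLN for $\bar X_n$ transfers the statement to $s_n$. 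You instead decompose $s_n$ via the empirical second moment and show $\frac{1}{n}\sum_i X_i^2 \to \infty$ a.s.\ by truncation ($X_i^2 \wedge M$, SLLN for each integer $M$, monotone convergence, and a countable intersection of full-measure events) -- all steps are sound, including your care with the order of quantifiers in $M$. What each approach buys: yours is cleaner, uses only the SLLN, and yields the sharper a.s.\ divergence of $s_n$; the paper's Borel--Cantelli argument is weaker as a limit statement but more diagnostic, since it exhibits the mechanism the paper cares about physically -- the exceedances of $s_n$ are driven by individual rare samples with $(X_n-\mu)^2 \geq nL$, i.e., the discrete ``jumps'' in the running variance that the numerical figures display. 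Either proof is acceptable; yours could even replace the paper's if one only wanted the theorem as stated.
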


We need a bit preparation before we can prove \thref{thm:infinite-jump}.

\begin{lem}[Second Borel-Cantelli]
Let $\{E_n\}$ to a sequence of independent events. If $\sum_{n} P(E_n) = \ii$ then\footnote{Here, $\limsup E_n \equiv \cap_{n\geq 1} \cup_{m \geq n} E_m$ and is equal to the set of all outcomes $\o$ such that $\o \in E_k$ for infinitely many $E_k$. Therefore, $P(\limsup E_n)$ can be interpreted as the probability that infinitely many events $E_k$ happens.} $P(\limsup E_n) = 1$. 
\end{lem}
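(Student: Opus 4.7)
The plan is to prove the statement by working with the complementary event and showing that it has probability zero. The set $\limsup E_n = \bigcap_{n \geq 1} \bigcup_{m \geq n} E_m$ has complement $\bigcup_{n \geq 1} \bigcap_{m \geq n} E_m^{\mathsf{c}}$, so by countable subadditivity it suffices to prove that $P\bigl(\bigcap_{m \geq n} E_m^{\mathsf{c}}\bigr) = 0$ for every fixed $n \geq 1$.

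First I would fix $n$ and write, by continuity of the probability measure from above applied to the decreasing sequence of sets $\bigcap_{m=n}^{N} E_m^{\mathsf{c}}$ (as $N \to \infty$),
\begin{equation}
P\!\left(\bigcap_{m \geq n} E_m^{\mathsf{c}}\right) = \lim_{N \to \infty} P\!\left(\bigcap_{m=n}^{N} E_m^{\mathsf{c}}\right).
\end{equation}
Then I would use the assumed independence of $\{E_n\}$, which implies independence of the complements $\{E_n^{\mathsf{c}}\}$, to factor the finite intersection:
\begin{equation}
P\!\left(\bigcap_{m=n}^{N} E_m^{\mathsf{c}}\right) = \prod_{m=n}^{N} \bigl(1 - P(E_m)\bigr).
\end{equation}

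Next I would apply the elementary inequality $1 - x \leq e^{-x}$, valid for all real $x$, to each factor, yielding
\begin{equation}
\prod_{m=n}^{N} \bigl(1 - P(E_m)\bigr) \leq \exp\!\left(-\sum_{m=n}^{N} P(E_m)\right).
\end{equation}
The hypothesis $\sum_{m} P(E_m) = \infty$ forces the tail sum $\sum_{m=n}^{N} P(E_m)$ to diverge as $N \to \infty$ (the series is nonnegative, so convergence of the tail would contradict divergence of the whole). Consequently the right-hand side tends to zero, and combined with the limit above gives $P\bigl(\bigcap_{m \geq n} E_m^{\mathsf{c}}\bigr) = 0$.

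Finally I would conclude by summing over $n$: countable subadditivity gives
\begin{equation}
P\!\left(\bigcup_{n \geq 1} \bigcap_{m \geq n} E_m^{\mathsf{c}}\right) \leq \sum_{n \geq 1} P\!\left(\bigcap_{m \geq n} E_m^{\mathsf{c}}\right) = 0,
\end{equation}
so $P(\limsup E_n) = 1$. The main technical point is the combination of independence (needed to factor the probability of the intersection) with the inequality $1-x \leq e^{-x}$, which converts the product into an exponential whose exponent diverges; without independence the factorisation fails and the argument breaks down entirely. Everything else is bookkeeping with continuity of measure and subadditivity.
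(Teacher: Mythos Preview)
Your argument is correct and is the standard textbook proof of the Second Borel--Cantelli Lemma: pass to the complement, factor the finite intersection using independence, bound the product via $1-x\le e^{-x}$, and let the exponent diverge. Note, however, that the paper does not supply its own proof of this lemma; it merely states the result (referring the reader to Ref.~\cite{durrett_2019}) and then invokes it in the proof of \thref{thm:infinite-jump}, so there is no alternative approach in the paper to compare against.
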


\begin{lem}
Let $Z$ be a non-negative random variable with infinite mean. Then, for any given $L > 0$, $\sum_{n=1}^\ii P(\{Z \geq  nL\}) = \ii$. \thlabel{lem:probsum}
\end{lem}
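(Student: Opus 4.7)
The plan is to use the standard ``layer-cake'' identity for the expectation of a non-negative random variable,
\begin{equation}
E[Z] = \int_0^\infty P(Z \geq t)\,dt,
\end{equation}
which is valid (in the extended sense, with $\infty = \infty$ allowed) for any non-negative $Z$ and follows from Fubini's theorem applied to $Z(\omega) = \int_0^\infty \mathbf{1}_{\{t \leq Z(\omega)\}}\,dt$. Under the hypothesis, the left-hand side equals $+\infty$.

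Next I would discretize the integral by partitioning $[0,\infty)$ into the intervals $[nL,(n+1)L)$ for $n=0,1,2,\ldots$. The key observation is that the tail function $t \mapsto P(Z \geq t)$ is non-increasing, so on each interval $[nL,(n+1)L)$ it is bounded above by its left endpoint value $P(Z \geq nL)$. This gives the upper bound
\begin{equation}
E[Z] = \sum_{n=0}^\infty \int_{nL}^{(n+1)L} P(Z \geq t)\,dt \;\leq\; L\sum_{n=0}^\infty P(Z \geq nL).
\end{equation}
Separating off the $n=0$ term using $P(Z \geq 0) \leq 1$ yields
\begin{equation}
E[Z] \;\leq\; L + L\sum_{n=1}^\infty P(Z \geq nL).
\end{equation}

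To finish, I would argue by contradiction: if $\sum_{n=1}^\infty P(Z \geq nL)$ were finite, then the right-hand side would be finite, forcing $E[Z] < \infty$ and contradicting the assumption that $Z$ has infinite mean. Hence $\sum_{n=1}^\infty P(Z \geq nL) = \infty$, as required.

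There is no real obstacle here; the only delicate point is the justification of the layer-cake formula in the case $E[Z] = \infty$, but this is a routine application of Tonelli's theorem with non-negative integrand, so no integrability is needed a priori. The rest is a one-line comparison between an integral of a monotone function and its Riemann sum.
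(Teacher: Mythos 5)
Your proof is correct. It reaches the same inequality as the paper, namely $\mathbb{E}[Z] \leq L + L\sum_{n\geq 1} P(Z \geq nL)$, but by a somewhat different route. The paper stays entirely discrete: it decomposes $\mathbb{E}[Z]$ over the level sets $\{nL \leq Z < (n+1)L\}$, bounds $Z$ above by $(n+1)L$ on each slice to conclude $\sum_n n\,P(nL \leq Z < (n+1)L) = \infty$, and then converts this into the tail sum via an interchange of the order of summation in $\sum_n \sum_{m=1}^n P(nL \leq Z < (n+1)L)$. You instead invoke the continuous layer-cake identity $\mathbb{E}[Z] = \int_0^\infty P(Z \geq t)\,dt$ and bound the integral over each $[nL,(n+1)L)$ by $L\,P(Z \geq nL)$ using monotonicity of the tail function, which lands on the tail sum directly and skips the double-sum rearrangement. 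The two arguments are really the discrete and continuous faces of the same Tonelli computation: the paper's summation interchange plays exactly the role that Tonelli plays in justifying your layer-cake formula. Yours is slightly shorter and makes the monotone-comparison structure transparent; the paper's is more self-contained in that it never needs the tail-integral representation as a black box. You correctly flag that the layer-cake identity must be justified without assuming integrability, and your appeal to Tonelli with a non-negative integrand handles that; one could also note that $\int_0^\infty P(Z>t)\,dt$ and $\int_0^\infty P(Z\geq t)\,dt$ agree since the two indicator sets differ at a single point for each $\omega$, but this does not affect the argument.
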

\begin{proof}
\bad 
\ii &= \mathbb{E}[Z] \\ 
    &= \sum_{n=0}^\ii \int_{nL\leq Z <(n+1)L} ZdP \\ 
    &\leq L\sum_{n=0}^\ii (n+1) P(\{ nL \leq Z < (n+1)L\}) \\
    &\leq L\sum_{n=0}^\ii n P(\{ nL \leq Z < (n+1)L\}) \\
    &\quad + L\sum_{n=0}^\ii P(\{ nL \leq Z < (n+1)L\}) \\
    &=\sum_{n=1}^\ii n P(\{ nL \leq Z < (n+1)L\}) + L
\ead 
It follows that $\sum_{n=1}^\ii nP(\{nL \leq Z < (n+1)L \}) = \ii$. Then:
\bad 
\ii &= \sum_{n=1}^\ii nP(\{nL \leq Z < (n+1)L \}) \\ 
    &= \sum_{n=1}^\ii \sum_{m=1}^n P(\{nL \leq Z < (n+1)L \})\\ 
    &= \sum_{m=1}^\ii \sum_{n=m}^\ii P(\{nL \leq Z < (n+1)L \})\\
    &= \sum_{m=1}^\ii P(\{Z \geq  mL\}).
\ead 
\end{proof}

\begin{corol}
Let $\{Z_n \}$ to be a sequence of independent and identically distributed non-negative random variables with infinite mean. Then, for any given $L > 0$, the number of the random variables $Z_n$ that satisfies $Z_n \geq nL$ is infinite almost surely. \thlabel{corol:infinite}
\end{corol}
\begin{proof}
We define the events $E_n = \{\o: Z_n(\o) \geq nL \}$. The corollary then follows from \thref{lem:probsum} and the second Borel-Cantelli lemma.
\end{proof}
\begin{proof}[Proof of \thref{thm:infinite-jump}] We first define the random variable $s'_{n} = \ff{1}{n-1}\sum_{i=1} \lp X_i - \m \rp ^2$ for $n\geq 2$. Next, we are going to show that, almost surely, infinitely many elements of the sequence $\{s'_n\}$ satisfy $s'_n \geq L$. Let $T_n  = (X_n-\m)^2$. Then, \thref{corol:infinite} applies and there are infinitely many $T_n$ that satisfies $T_n \geq nL$. Let $n_k \geq 2$ for $k \geq 1$ be an increasing sequence that satisfies $T_{n_k} \geq n_kL$. Then for each $k \geq 1$, we have $s'_{n_k} \geq \ff{1}{n_k-1} T_{n_k} > L$. Let $\O_{s',L}$ be the set of outcomes such that $\{s'_n > L\}$ is satisfied for infinitely many $n$. We showed that $P(\O_{s',L}) = 1$.

We now show that a similar statement holds for $s_n = \ff{1}{n-1}\sum_{i=1}^n \lp X_i - \bb X_n \rp^2$. By the SLLN, there is a set of outcomes $\O_{\bb X\to \m}$ such that $\lim_{n\to \ii} \bb X_n(\o) = \m$ for all $\o \in \O_{\bb X_n \to \m}$ and $P(\O_{\bb X_n\to \m})=1$. Let $\O_s = \O_{s',3L} \cap \O_{\bb X \to \m}$. Choose an arbitrary $\o \in \O_s$. Since $\O_s \subseteq \O_{s',3L}$, there is an infinite sequence $\{n_k \geq 2\}$ such that $s'_{n_k}(\o) \geq 3L$. As $\bb X_n$ converges to $\m$ in $\O_{s}\subseteq\O_{\bb X_n \to \m}$, the sequence $\{n_k\}$ has an infinite subsequence $\{m_k \geq 2\}$ that also satisfies $\abs{\bb X_{m_k}(\o)-\m} < L$. As $s_n(\o) = s'_n(\o) - \ff{n}{n-1} \lp X_n(\o)-\m \rp^2$, it follows that $s'_{m_k} \geq \ff{2n-3}{n-1}L$. Since for $n \geq 2$ $2n-3 \geq n-1$ is satisfied, $s'_{m_k} \geq L$ is valid for all $m_k$. 
The theorem is proved if we can show $P(\O_{s'}) = 1$. To see this note that $P(\O_{s}) = 1 \iff P(\O \setminus \O_{s}) = 0$. The latter follows from the following relation. $P(\O \setminus \O_{s}) = P\lp \O \cap \lp \O^\mathsf{c}_{s',3L} \cup \O^\mathsf{c}_{\bb X_n \to \m}\rp\rp  \leq P(\O \cap \O_{\bb X_n \to \m}^\mathsf{c})+P(\O \cap \O_{s'}^\mathsf{c}) = 0$. 
\end{proof}

Let $\O$ be a finite sample space associated with the $\s$-algebra $\calf = 2^\O$,the set of all subsets of $\O$, and a family of probability distributions $P^t:\calf \to [0,1]$ for $t \in (0,1]$. We assume that $P^t$ is continuous in the sense that $P^t(\o)$ is a continuous function of $t$ for $t \in (0,1]$ for all $\o \in \O$. We consider a non-negative random variable $X^t$ which is continuous in $t$ in the same sense. We further assume that there is a set $E \subset \O$ such that $\lim_{t \to 0} P^t(\o) = 0$ and $\lim_{t \to 0} P^t(\o)X^t(\o) \neq 0$ for all $\o \in E$. 

\begin{thm}
Let $\d,\e > 0$. There is an integer $N(\d,\e)$ such that for all $N \geq N(\d,\e)$:
\be 
\lim_{t\to 0} P^t(\abs{\bb X^t_{N} - \lp \m-\Delta \rp } \leq \d  ) \geq  1-\e.
\label{eq:thmgap}
\ee 
\thlabel{thm:gap}

\begin{proof}[Proof of \thref{thm:gap}]
We first define another probability measure on $\O$ that we will denote by $P^0$. $P^0$ is defined by $P^0(\o) = \lim_{t \to 0} P^t(\o)$ for all $\o \in \O$. We also define $X^0$ similarly: $X^0(\o) = \lim_{t \to 0} X^t(\o)$. Effectively, this definition ignores exceptional configurations. It follows that, expectation value of $X^0$ is $\m-\D$:
\begin{widetext}
\bad 
\m_{X^0} &= \sum_{\o \in \O}P^0(\o)X^0(\o)\\ 
&= \sum_{\o \in E}P^0(\o)X^0(\o) + \sum_{\o \in \lp \O \setminus E\rp} P^0(\o)X^0(\o) \\ 
&= \sum_{\o \in \lp \O \setminus E\rp} \lim_{t\to 0}P^t(\o)\lim_{t\to 0}X^t(\o) \\
&= \lim_{t\to 0}\sum_{\o \in \lp \O \setminus E\rp} P^t(\o)X^t(\o) \\ 
&= \m-\D.
\ead 
Now given $\d,\e>0$, by the WLLN there is an integer $N(\d,\e)$ such that for all $N\geq N(\d,\e)$:
\be
P \lp \abs{\bb X^0_{N}-(\m-\D)} \leq \d \rp \geq 1-\e .
\ee 
Now we consider $\lp E^\mathsf{c} \rp ^N$, the set of ensembles of sample size $N$ that does not include any exceptional configurations where $E^\mathsf{c} \Omega \setminus E$. For $P^0$, exceptional configurations can be ignored and therefore $\O^{N} \equiv \lp E^\mathsf{c} \rp^N$ effectively so it follows that:
\bad
P^0\lp \abs{\bb X^0_N - \lp \m-\D \rp } \leq \d \rp &= P^0 \lp \abs{\bb X^0_N - \lp \m-\D \rp} \leq \d \bigg| \o^N \in \lp E^\mathsf{c} \rp^N \rp \\ 
&= \lim_{t\to 0} P^t \lp \abs{\bb X^t_N - \lp \m-\D \rp} \leq \d  \bigg| \o^N \in \lp E^\mathsf{c} \rp^N \rp.
\label{eq:limit} 
\ead

Now we make the following observation. Let $\lp \lp E^\mathsf{c} \rp ^{N}\rp^\mathsf{c} \subset \O^{N}$ be the subset of $\O^{N_S}$ that includes at least one element from $E$, the set of the exceptional configurations. The probability of $\lp \lp E^\mathsf{c} \rp ^N\rp^\mathsf{c}$ occurs is a polynomial in the variables $\lbrace P^t(\o): \o \in E \rbrace$ with the constant term is vanishing. Since $\lim_{t\to 0}P^t(\o)=0$ for all $\o \in E$, we have:
\bad 
\lim_{t\to 0}P^t \lp \lp \lp E^\mathsf{c} \rp ^N\rp^\mathsf{c} \rp &= 0 \\ 
\lim_{t\to 0}P^t \lp \lp E^\mathsf{c} \rp ^N\rp &= 1 \label{eq:dense}
\ead 

Now we complete the proof of Theorem \ref{thm:gap} by combining Eqs.~\eqref{eq:limit} and \eqref{eq:dense}:
\bad 
\lim_{t\to 0} P^t \lp \abs{\bb X^t_N - \lp \m-\D \rp} \leq \d \rp &= \lim_{t\to 0} P^t \lp \abs{\bb X^t_N - \lp \m-\D \rp} \leq \d  \bigg| \o^N \in \lp E^\mathsf{c} \rp^N \rp P^t \lp \o^N \in  \lp E^\mathsf{c}\rp^N  \rp \\ 
& + \lim_{t\to 0} P^t \lp \abs{\bb X^t_N - \lp \m-\D \rp} \leq \d  \bigg| \o^N \in \lp \lp E^\mathsf{c} \rp ^N\rp^\mathsf{c}\rp P^t \lp \o^N \in \lp \lp E^\mathsf{c} \rp ^N\rp^\mathsf{c} \rp \\ 
&= \lim_{t\to 0} P^t \lp \abs{\bb X^t_N - \lp \m-\D \rp} \leq \d  \bigg| \o^N \in \lp E^\mathsf{c} \rp^N \rp \\
&= P^0\lp \abs{\bb X^0_N - \lp \m-\D \rp } \leq \d \rp \\ 
&\geq 1-\e.
\ead
\end{widetext}
\end{proof}
\end{thm}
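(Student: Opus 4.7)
\textbf{Proof plan for \thref{thm:gap}.} The plan is to reduce the claim to a direct application of the Weak Law of Large Numbers (WLLN) to an auxiliary ``limit'' probability space in which exceptional configurations are formally removed, and then to transfer the conclusion back to $P^t$ by showing that, as $t \to 0$, ensembles contain no exceptional configurations with probability approaching $1$.

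First, I would define a limit probability measure $P^0$ on $(\Omega,\mathcal F)$ by $P^0(\omega)=\lim_{t\to 0} P^t(\omega)$ and a limit random variable $X^0(\omega)=\lim_{t\to 0} X^t(\omega)$; these are well defined by continuity of $P^t$ and $X^t$ in $t$, and $P^0$ is indeed a probability measure because $\Omega$ is finite so $\sum_\omega P^0(\omega)=\lim_t \sum_\omega P^t(\omega)=1$. By the defining property of $E$, $P^0(\omega)=0$ for $\omega\in E$, so $P^0$ is effectively supported on $E^{\mathsf c}$. A short computation using finiteness (to swap the limit past the finite sum) gives
\begin{equation}
\mathbb E_{P^0}[X^0]
=\sum_{\omega\in E^{\mathsf c}} \lim_{t\to 0} P^t(\omega) X^t(\omega)
=\lim_{t\to 0}\mu^{/\!\!\!e}_{X^t}
=\mu-\Delta,
\end{equation}
where the contribution from $E$ drops out because $\lim_{t\to 0}P^t(\omega)X^t(\omega)$ is finite on $E$ and it is multiplied by $\lim_t P^t(\omega)=0$ when written in the form $P^t(\omega)X^t(\omega)=P^t(\omega)\cdot X^t(\omega)$, with $X^t(\omega)$ bounded locally in $t$.

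Second, I would apply the WLLN to the i.i.d.\ sequence drawn from $P^0$: given $\delta,\epsilon>0$ there is an integer $N(\delta,\epsilon)$ such that for every $N\ge N(\delta,\epsilon)$,
\begin{equation}
P^0\!\left(\bigl|\bar X^0_N-(\mu-\Delta)\bigr|\le\delta\right)\ \ge\ 1-\epsilon.
\end{equation}
This step is unproblematic since $\Omega$ is finite, so $X^0$ automatically has finite variance under $P^0$.

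Third, I would transfer this bound from $P^0$ to $P^t$ in the limit $t\to 0$. The key observation is that the event $A_N=\{\omega^{[N]}\in (E^{\mathsf c})^N\}$ that no exceptional configuration appears in the ensemble has $P^t$-probability $1-(1-\prod_{\omega\notin E}\dots)$, which is a polynomial in $\{P^t(\omega):\omega\in E\}$ with zero constant term; since these all tend to $0$, $\lim_{t\to 0}P^t(A_N)=1$. Moreover, conditioning on $A_N$ reduces both $P^t$ and $P^0$ to probability measures on the finite set $(E^{\mathsf c})^N$ where every weight depends continuously on $t$, so
\begin{equation}
\lim_{t\to 0} P^t\!\left(\bigl|\bar X^t_N-(\mu-\Delta)\bigr|\le\delta\,\bigm|\,A_N\right)
= P^0\!\left(\bigl|\bar X^0_N-(\mu-\Delta)\bigr|\le\delta\right).
\end{equation}
Combining with $P^t(A_N)\to 1$ via the law of total probability yields
\begin{equation}
\lim_{t\to 0} P^t\!\left(\bigl|\bar X^t_N-(\mu-\Delta)\bigr|\le\delta\right)\ \ge\ 1-\epsilon,
\end{equation}
completing the proof.

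The main obstacle I expect is the bookkeeping in the third step: one has to argue carefully that the conditional limit matches $P^0$ rather than smuggling in extra mass on $E$. Finiteness of $\Omega$ makes this rigorous almost for free (all sums are finite so limits commute with them), but writing it cleanly requires separating the ``no exceptional configuration'' event from its complement and noting that the latter contributes negligibly as $t\to 0$.
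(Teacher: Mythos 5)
Your proposal is correct and follows essentially the same route as the paper's proof: define the limit objects $P^0$ and $X^0$, show $\mathbb{E}_{P^0}[X^0]=\mu-\Delta$, invoke the WLLN for $P^0$, and transfer to $P^t$ by conditioning on the event that no exceptional configuration appears, whose $P^t$-probability tends to $1$ because its complement's probability is a polynomial with vanishing constant term in $\{P^t(\omega):\omega\in E\}$. One small caveat: your justification that the $E$-contribution drops out because ``$X^t(\omega)$ is bounded locally in $t$'' is inconsistent with the definition of $E$ (there $P^t(\omega)\to 0$ while $P^t(\omega)X^t(\omega)\not\to 0$, forcing $X^t(\omega)\to\infty$); the correct reason, which is what the paper uses, is simply that $\mu-\Delta$ is by definition the $t\to 0$ limit of the sum restricted to $E^{\mathsf{c}}$.
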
 

\section{Median of Means} \label{appendix:mom}
In this section we will prove Eq.~\eqref{eq:mom_confidence} for the median of means by modifying the arguments given in Ref.~\cite{lerasle2019lecture} to include correlations between samples. Consider a random variable $X$ with mean $\mu_X$. Given an $\e > 0$, we aim to find a lower bound for the probability that $\abs{\hat \m_\mom - \m_X } < \e$, where $\hat \m_\mom$ is defined in Sec.~\ref{sec:non-asymptotic-estimator}. If there are $K$ batches of size $B$, for this to happen less than $\ff K 2$ of the batch means $\hat \m_i$ must be outside the range $(\m_X-\e,\m_X+\e)$. Let us define the indicator random variables $I_i$ for $i=1,\cds, K$. $I_i$ defined to be $1$ if $\hat \m_i \in (\m-\e,\m+\e)$ and $0$ otherwise. Consequently:
\be 
Prob\lp \abs{\hat \m_\mom-\m} < \e \rp \geq Prob\lp \ff 1 K \sum_{i=1}^K {I_i} < \ff 1 2 \rp 
\ee
Since the batches are independent, we can use Hoeffding's inequality (\cite{durrett_2019})
\be 
Prob\lp \ff 1 K \sum_{i=1}^K {I_i} < \ff 1 2 \rp \geq 1-e^{-2K\lp \ff 1 2 - \mathbb{E}[I_1] \rp^2},
\ee 
where the first indicator function $I_1$ is chosen for convenience. Now we define $\hat \m_1$ to be the standard deviation of $X$ and use Chebyshev's inequality (\cite{durrett_2019}) to obtain:
\bad 
\mathbb{E}[I_1] - \ff 1 2 &= \ff 1 2  - Prob\lp \abs{\hat \m_1 - \m} \geq \e \rp\\
&\geq \ff{1}{2} - \ff{\s_1^2}{\e^2}.
\ead 
By choosing $\e = 2\s_1$, we obtain:
\be 
Prob\lp \abs{\hat \m_\mom-\m} < 2\s_1 \rp \geq 1 - e^{-\ff K 8} \label{eq:mom_helper}
\ee 
To estimate $\s_1$, we note that $\hat \m_1 = \ff 1 B \sum_{n=1}^B X_n$. Then one obtains:
\bad 
\s_1^2 &= \ff{1}{B^2} \lk \sum_{n=1}^B Var(X_n) + 2\sum_{m<n} Cov(X_m,X_n)\rk \\ 
&=\ff{1}{B^2} \lk B \s^2 + 2 \sum_{m=1}^{B-1} \sum_{n=m+1}^{B} Cov(X_m,X_n) \rk \\ 
&= \ff{1}{B^2} \lk B \s^2 + 2\s^2 \sum_{m=1}^{B-1} \sum_{n=m+1}^{B} \G_X(n-m) \rk \\
&= \ff{1}{B^2} \lk B \s^2 + 2\s^2 \sum_{t=1}^{B-1} \lp B-t \rp \G_X(t) \rk \\ 
&= \ff{\s^2}{B}\lk 1 + 2\sum_{t=1}^{B-1}\lp 1-\ff{t}{B}\rp  \G_X(t)\rk \\ 
&= \ff{\s^2}{B}2\tau_{X,int}(B),
\ead
where we have defined the autocorrelation function $\G_X(t) \equiv \ff{1}{\s^2} Cov(X_n,X_{n+t})$ and the integrated autocorrelation time $\tau_{X,int}(B) = \ff 1 2 + \sum_{t=1}^{B-1}(1-\ff{t}{B})  \G_X(t)$ ( the sequence $\{X_n\}$ is assumed to be stationary, $\G_X(t)$ is independent of $n$.). Eq.~\eqref{eq:mom_confidence} then follows by combining above inequality with \eqref{eq:mom_helper}.

\section{Gauss-Hermite Quadrature}
\label{app:hermite}
The polynomials $\he_n(\xi)$ are defined by:
\be 
\he_n(\xi) = (-1)^n e^{\ff 1 2 \xi^2}\frac{d^n}{dt^n}e^{-\ff 1 2 \xi^2}
\ee 
and have the properties:
\ba
\frac{1}{\ss {2\pi}} \int dt\, e^{-\ff 1 2 \xi^2} \xi^m \he_n(\xi) &= 0 \quad \text{ for } 0\leq m<n, \label{eq:quadrature-vanishing} \\
\frac{1}{\ss {2\pi}} \int dt\, e^{-\ff 1 2 \xi^2} \he_m(\xi) \he_n(\xi) &= n! \d_{nm}, \label{eq:orthogonality} \\ 
H_n(\xi) &= \xi^n + \cds. \label{eq:hermite-form}
\ea
Consider a polynomial $f(\xi)$ of degree at most $2n-1$. We can then write 
\be
f(\xi) = q(\xi)\he_n(\xi)+r(\xi)
\ee
where both $q(\xi)$ and $r(\xi)$ have degree at most $n-1$. By Eq.~\eqref{eq:quadrature-vanishing} one has:
\be 
\frac{1}{\ss {2\pi}} \int d\xi\, e^{-\xi^2/2} q(\xi)\he_n(\xi) = 0,
\ee 
and therefore one obtains:
\be 
\frac{1}{\ss {2\pi}} \int dt\, e^{-\ff 1 2 \xi^2}f(\xi) = \frac{1}{\ss {2\pi}} \int d\xi\, e^{-\ff 1 2 \xi^2} r(\xi).
\label{eq:quadrature-reduction}
\ee

As $r(\xi)$ is a polynomial of degree at most $n-1$, it is determined by its values at $n$ points. Let us choose these points $\left\{\xi^{(n)}_a\rvert a = 1,\cds n \right\}$ as the roots of $\he_n(\xi)=\prod_{a=1}^n(\xi -\xi_a^{(n)})$. Then we have:
\bad 
r(\xi) &= \sum_a r(\xi_a) \prod_{\ms{1 \leq b \leq n \\ b \neq a}} \ff{\xi-\xi_b}{\xi_a-\xi_b} \\ &= \sum_a r(\xi_a)\frac{1}{\prod_{b\neq a} (\xi_a-\xi_b)}\ff{\he_n(\xi)}{\xi-\xi_a}.
\ead
This allows us to express Eq.~\eqref{eq:quadrature-reduction} as:
\bad 
\frac{1}{\ss {2\pi}} \int dt\, e^{-\ff 1 2 \xi^2}r(\xi) &= \sum_a r(\xi_a)\frac{1}{\prod_{b\neq a} (\xi_a-\xi_b)} \\ &\times \frac{1}{\ss {2\pi}} \int dt\, e^{-\ff 1 2 \xi^2}\ff{\he_n(\xi)}{\xi-\xi_a}.
\ead 
After defining $w_a$ as:
\be 
w_a =  \ff{1}{\prod_{b\neq a}\lp \xi_a-\xi_b \rp}\frac{1}{\ss {2\pi}} \int dt\, e^{-\ff 1 2 \xi^2}\ff{\he_n(\xi)}{\xi-\xi_a},
\ee 
one can use $f(\xi_a) = q(\xi_a)\he_n(\xi_a)+r(\xi_a) = r(\xi_a)$ to obtain:
\be 
\frac{1}{\ss {2\pi}} \int d\xi\, e^{-\ff 1 2 \xi^2}f(\xi) = \sum_a w_a f(\xi_a).
\ee

Moreover, if $q(\xi)$ is a polynomial of degree at most $n$:
\bad
\frac{1}{\prod_{b\neq a} (\xi_a-\xi_b)} \frac{1}{\ss {2\pi}} \int dt\, e^{-\ff 1 2 \xi^2}\ff{\he_n(\xi)}{\xi-\xi_a}q(\xi) = q(\xi_a)w_a.
\label{eq:hermite-qw-rel}
\ead
To see that consider $q(\xi) = \xi^m$ for $m < n$. Using the identity 
\bad 
\xi^m = \xi_a^m + \lp \xi-\xi_a\rp \lp \xi_a^{m-1}+\xi_a^{m-2}\xi+\cds+\xi_a\xi^{m-2}+\xi^{m-1}\rp
\ead
and Eq.~\eqref{eq:quadrature-vanishing}, we see that
\bad
\frac{1}{\prod_{b\neq a} (\xi_a-\xi_b)} \frac{1}{\ss {2\pi}} \int dt\, e^{-\ff 1 2 \xi^2}\ff{\he_n(\xi)}{\xi-\xi_a}\xi^m = \xi_a^m w_a
\ead
for $m\leq n$, from which \eqref{eq:hermite-qw-rel} follows.
Now choosing $q(\xi) = \he_{n-1}(\xi)$ and using Eqs.~\eqref{eq:quadrature-vanishing} and \eqref{eq:hermite-form}, the left hand side of Eq.~\eqref{eq:hermite-qw-rel}\ becomes:
\bad 
\frac{1}{\prod_{b\neq a} (\xi_a-\xi_b)}\frac{1}{\ss {2\pi}} \int dt\, e^{-\ff 1 2 \xi^2}\he_{n-1}(\xi)\he_{n-1}(\xi).
\ead 
Using Eq.~\eqref{eq:orthogonality}, this leads to:
\bad
w_a = \ff{(n-1)}{\he'_{n}(\xi_a)\he_{n-1}(\xi_a)},
\ead 
where $w_a$ satisfies the normalization:
\bad 
\sum_a w_a = 1.
\ead 

A table of the weights and roots corresponding to the Hermite polynomials used in this work is provided in Table~\ref{tab:hermite-table-all}.
\begin{table}[!th]
\begin{tabular}{ |c|c|c| }
\hline
n & Roots $\xi_a^{(n)}$ & Weights $w_a^{(n)}$ \\ \hline
\multirow{1}{*}{2} 
 & $1$ & $1/2$ \\
 \hline
\multirow{2}{*}{3} 
 & $0$ & $2/3$ \\
 & $\ss 3$ & $1/6$ \\
\hline
\multirow{2}{*}{4}
 & $\ss{3-\ss{6}}$ & $1/12 \lp 3 +\ss 6 \rp$ \\
 & $\ss{3+\ss 6}$ & $1/12 \lp 3 - \ss 6 \rp$ \\ 
\hline
\multirow{3}{*}{5}
& $ 0 $ & $ 0.5333333333 $ \\
& $ 1.3556261800 $ & $ 0.2220759220 $ \\
& $ 2.8569700139 $ & $ 0.0112574113 $ \\
\hline
\multirow{3}{*}{6}
& $ 0.6167065902 $ & $ 0.4088284696 $ \\
& $ 1.8891758778 $ & $ 0.0886157460 $ \\
& $ 3.3242574336 $ & $ 0.0025557844 $ \\
\hline
\multirow{4}{*}{7}
& $ 0 $ & $ 0.4571428571 $ \\
& $ 1.1544053947 $ & $ 0.2401231786 $ \\
& $ 2.3667594107 $ & $ 0.0307571240 $ \\
& $ 3.7504397177 $ & $ 0.0005482689 $ \\
\hline
\multirow{4}{*}{8}
& $ 0.5390798114 $ & $ 0.3730122577 $ \\
& $ 1.6365190424 $ & $ 0.1172399077 $ \\
& $ 2.8024858613 $ & $ 0.0096352201 $ \\
& $ 4.1445471861 $ & $ 0.0001126145 $ \\
\hline
\multirow{5}{*}{9}
& $ 0 $ & $ 0.4063492063 $ \\
& $ 1.0232556638 $ & $ 0.2440975029 $ \\
& $ 2.0768479787 $ & $ 0.0499164068 $ \\
& $ 3.2054290029 $ & $ 0.0027891413 $ \\
& $ 4.5127458634 $ & $ 0.0000223458 $ \\
\hline
\end{tabular}
\caption{Non-negative roots and weights for the first few Hermite polynomials. For every positive root, there is a negative root with the same magnitude and weight.}
\label{tab:hermite-table-all}
\end{table}

\FloatBarrier
\bibliography{refs}

\end{document}